\let\origtheorem\theorem
\let\origproof\proof
\let\theorem\origtheorem
\let\proof\origproof
\newtheorem{teorema}{Theorem}
\newcommand{\lfun}{\longrightarrow}
\newcommand{\disj}{\parallel}
\newcommand{\set}[2]{\{#1 \sqcup #2\}}
\newcommand{\true}{true}
\newcommand{\false}{false}
\newcommand{\Forall}{foreach}
\newcommand{\Exists}{exists}
\newcommand{\setlog}{$\{log\}$\xspace}
\newcommand{\SATRIS}{SAT_\mathcal{RIS}}
\newcommand{\SATX}{SAT_\mathcal{X}}
\newcommand{\LRIS}{\mathcal{L}_\mathcal{RIS}}
\newcommand{\RQ}{\mathcal{RQ}}
\newcommand{\LRQ}{\mathcal{L}_\RQ}
\newcommand{\SATRQ}{SAT_\RQ}
\newcommand{\Ur}{\mathcal{X}}
\newcommand{\LX}{\mathcal{L}_\Ur}
\renewcommand{\plus}{\mathbin{\scriptstyle\sqcup}}
\newcommand{\ww}{\{\cdot \plus \cdot\}}
\newcommand{\e}{\varnothing}
\newcommand{\Var}{\mathcal{V}}
\newcommand{\FUr}{\mathcal{F}_\Ur}
\newcommand{\sUr}{\mathsf{X}}
\newcommand{\Set}{\mathcal{S}}
\newcommand{\isx}{\mathit{isX}}
\newcommand{\sSet}{\mathsf{Set}}
\newcommand{\TRQ}{\mathcal{T}_\RQ}
\newcommand{\TX}{\mathcal{T}_\Ur}
\newcommand{\FX}{\Phi_\Ur}
\newcommand{\FRQ}{\Phi_\RQ}
\newcommand{\FUE}{\FRQ}
\newcommand{\FU}{\Phi_{\forall}}
\newcommand{\FE}{\Phi_{\exists}}
\newcommand{\FEU}{\Phi_{\exists\hspace{-2pt}\forall}}
\newcommand{\CRQ}{\mathcal{C}_\RQ}
\newcommand{\iS}{\mathcal{R}}
\newcommand{\iD}{\mathcal{D}}
\newcommand{\iF}[1]{(#1)^\iS}
\newcommand{\flt}{\phi}   
\newcommand{\ct}{c}       
\newcommand{\vv}{\vec{v}}
\newcommand{\Fpv}{\phi}
\newcommand{\F}{\Fpv}
\newcommand{\why}[1]{\tag*{{\footnotesize [by #1]}}}
\renewcommand{\iff}{\Leftrightarrow}
\newcommand{\q}{\text{\normalfont\'{}}}
\newcommand{\ql}{\hspace{5pt}\text{\normalfont\'{}}}
\newcommand{\qr}{\text{\normalfont\'{}}\hspace{5pt}}
\definecolor{formula}{gray}{0.9}
\title{A Set-Theoretic Decision Procedure for Quantifier-Free, Decidable
Languages Extended with Restricted Quantifiers}
\titlerunning{A Decision Procedure for Restricted Quantifiers}
\author{Maximiliano Cristi\'a \and Gianfranco Rossi}
\institute{M. Cristi\'a --- corresponding author \at Universidad Nacional de Rosario and CIFASIS,
Rosario, Argentina -- \email{cristia@cifasis-conicet.gov.ar} \and G. Rossi \at
Universit\`a di Parma, Parma, Italy -- \email{gianfranco.rossi@unipr.it}}
\begin{document}
\catcode`\@=\active
\def@{\mathbin{\bullet}}

\newcommand{\risnopattern}[3]{\{ #1 : #2 | #3\}}
\newcommand{\risnofilter}[3]{\{ #1 : #2 @ #3\}}
\newcommand{\risnocp}[2]{\{ #1 | #2\}}
\newcommand{\riss}[3]{\{ #1 | #2 @ #3\}}

\newcommand{\defris}[1]{\riss{#1}{\Fpv}{\Ppv}}
\newcommand{\emptyris}[0]{\defris{\e}}
\newcommand{\defrisvar}[0]{\defris{\bar{D}}}
\newcommand{\defriss}[0]{\defris{D}}
\newcommand{\defrisinit}[0]{\defris{\set{d}{D}}}

\maketitle

\begin{abstract}
Let $\LX$ be the language of  first-order, decidable theory $\Ur$. Consider the
language, $\LRQ(\mathcal{X})$, that extends $\LX$ with formulas of the form
$\forall x \in A: \phi$ (restricted universal quantifier, RUQ) and $\exists x
\in A: \phi$ (restricted existential quantifier, REQ), where $A$ is a finite
set and $\phi$ is a formula made of $\Ur$-formulas, RUQ and REQ. That is,
$\LRQ(\mathcal{X})$ admits nested restricted quantifiers. In this paper we
present a decision procedure for $\LRQ(\mathcal{X})$ based on the decision
procedure already defined for the Boolean algebra of finite sets extended with
restricted intensional sets ($\LRIS$). The implementation of the decision
procedure as part of the \setlog (`setlog') tool is also introduced. The usefulness of the
approach is shown through a number of examples drawn from several real-world
case studies.
\end{abstract}

\section{Introduction}\label{sec:intro}

Restricted quantifiers (RQ) are formulas of the following forms:
\begin{gather}
\forall x \in A: \phi(x) \\
\exists x \in A: \phi(x)
\end{gather}
where $A$ is a set called \emph{quantification domain}. The first form is
called \emph{restricted universal quantifier} (RUQ), while the second is called
\emph{restricted existential quantifier} (REQ). The semantics of such formulas
is, respectively:
\begin{gather}
\forall x (x \in A \implies \phi(x)) \\
\exists x (x \in A \land \phi(x))
\end{gather}

RQ are present in formal notations such as B \cite{schneider2001b}, TLA+
\cite{DBLP:books/aw/Lamport2002} and Z \cite{Woodcock00} making it important to
be able to automatically reason about RQ. In fact, RQ allow to express
important program or system properties. For example, one may need to express
that some property, $\phi$, holds for all the users ($U$) of a system. Then, it
can be expressed by means of a RUQ:
\begin{equation*}
I \defs \forall u \in U: \phi(u)
\end{equation*}
Later, one might need to prove that $I$ is a state invariant of
that system by discharging proofs of the form:
\begin{equation}\label{eq:inv}
I \land T \implies I'
\end{equation}
where $T$ is a state transition and $I'$ is the result of substituting every
state variable $v$ in $I$ by $v'$---i.e., the next-state variable. In this
scenario it would be important if many (or all) of those proofs can be
performed automatically.

In a recent article \cite{DBLP:journals/jar/CristiaR21a} we have presented a
decision procedure for a language based on extensional and intensional sets
called $\LRIS(\Ur)$, where $\Ur$ is a first-order, decidable theory. $\LRIS$
can express RQ where the inner formula does not contain other RQ. Then, $\LRIS$
can automatically discharge a proof such as \eqref{eq:inv}. However, in
general, $\LRIS$ does not allow for nested RQ. For example, if $\psi$ is a
formula depending on a user and a process, the following:
\begin{equation}\label{eq:nested}
I_2 \defs \forall u \in U: (\forall p \in P : \psi(u,p))
\end{equation}
where $P$ is the set of processes of the system, is a $\LRIS$ formula only if
$U$ is not part of $\psi$. Consequently, the decision procedure defined for
$\LRIS$ is unable to automatically reason about all formulas such as
\eqref{eq:inv} where $I$ is substituted by $I_2$. Therefore, finding a decision
procedure for formulas of the form \eqref{eq:inv} but involving predicates such
as $I_2$ would be a valuable contribution to the formal verification community.

In this paper we depart from $\LRIS(\Ur)$ to define a new language,
$\LRQ(\Ur)$, admitting finitely nested RQ at any level, when $\LX$ is a
quantifier-free, first-order, decidable language\footnote{Although in
$\LRIS(\Ur)$, $\LX$ can be a quantified language it make little sense to extend
such a language with RQ. Hence, in this paper we focus on quantifier-free
languages that need to be extended to support at least a restricted form of
quantification.}. Then, departing from the decision procedure defined for
$\LRIS$ we define a decision procedure for $\LRQ$. In particular, we provide a
precise condition defining the class of decidable formulas where RUQ and REQ
can be arbitrarily nested. The implementation of these results as part of the
\setlog (`setlog') tool \cite{setlog} is also briefly discussed. More space is committed
to show that the implementation works in practice by providing several examples
of non trivial properties and verification conditions, drawn from real-world
case studies, that \setlog is able to deal with.

The paper is structured as follows. Section \ref{lrq} introduces $\LRQ$ by
first giving an informal presentation (\ref{informal}) and then its formal
syntax (\ref{syntax}) and semantics (\ref{semantics}). The solver for $\LRQ$ is
presented in Section \ref{solver}. In Section \ref{deci} soundness and
completeness (\ref{sandc}) and termination (\ref{term}) of the solver are
proved. Some extensions to $\LRQ$, helping to avoid the introduction of
existential variables, are introduced in Section \ref{extensions}. The
implementation of $\LRQ$ and its solver as part of \setlog is shown in Section
\ref{practice}. In that section we also comment on three case studies carried
out with \setlog involving RQ. Our results are discussed and compared with
similar works in Section \ref{relwork}. Section \ref{concl} gives our
conclusions.

\section{\label{lrq}Formal Syntax and Semantics}

This section describes the syntax and semantics of the set-theoretic language
of Restricted Quantifiers, $\LRQ$. In other words, $\LRQ$ builds RQ from
fundamental concepts drawn from set theory. A gentle, informal introduction is
provided in Section \ref{informal}, followed by the formal presentation of the
language.

$\LRQ$ is a first-order predicate language with terms of sort set and terms
designating ur-elements\footnote{Ur-elements (also known as atoms or
individuals) are objects which contain no elements but are distinct from the
empty set.}. The latter are provided by an external first-order theory $\Ur$
(i.e., $\LRQ$ is parametric with respect to $\Ur$). $\Ur$ must include: a class
$\Phi_\Ur$ of admissible $\Ur$-formulas based on a set of function symbols
$\FUr$ and a set of predicate symbols $\Pi_\Ur$ (providing at least equality);
an interpretation structure $\mathcal{I}_\Ur$ with domain $\iD_\sUr$ and
interpretation function ${(\cdot)}^{\mathcal{I}_\Ur}$; and a decision procedure
$\SATX$ for $\Ur$-formulas. $\LRQ(\Ur)$ denotes the instance of $\LRQ$ based on
theory $\Ur$.

$\LRQ$ provides special set constructors, and a handful of basic predicate
symbols endowed with a pre-designated set-theoretic meaning. Set constructors
are used to construct both restricted intensional sets (RIS) and extensional
sets. Set elements are the objects provided by $\Ur$, which are manipulated
through the primitive operators that $\Ur$ offers. Hence, $\LRQ$ sets represent
\emph{untyped unbounded finite hybrid sets}, i.e., unbounded finite sets whose
elements are of arbitrary sorts. $\LRQ$ formulas are built in the usual way by
using conjunction and disjunction of atomic formulas.

\subsection{$\LRQ$ in a Nutshell}\label{informal}

$\LRQ$ provides three kinds of set terms: $\e$, the empty set; $\{x \plus
A\}$, called \emph{extensional set} whose interpretation is $\{x\} \cup A$; and
$\{\ct:D | \phi\}$, called \emph{restricted intensional set} (RIS) whose
interpretation is $ \{\ct : \ct \in D \land \flt(c)\} $, where $D$ is called
\emph{domain} and $\flt$ is called \emph{filter}. At the same time, $\LRQ$ is a
parametric language w.r.t. the language of some theory $\Ur$. The elements of
sets are $\Ur$ elements and the filters of RIS can be either $\Ur$ formulas or
a very specific kind of $\LRQ$ formulas. $\Ur$ is expected to be a decidable
theory providing at least equality. For example, if $\Ur$ is the theory of
linear integer arithmetic (LIA) then $\LRQ(\Ur)$ will allow to reason about
formulas combining RQ over integer formulas.

In $\LRQ$ formulas are conjunctions and disjunctions of $\LRQ$ and $\Ur$
constraints. In turn, $\LRQ$ provides the set equality ($=$), membership
($\in$) and subset ($\subseteq$) relations, as  constraints.

\begin{example}\label{ex:min}
If $\Ur$ is the theory of LIA then the following is a $\LRQ(\Ur)$ formula:
\[
min \in \{y \plus S\} \land \{y \plus S\} \subseteq \{x:\{y \plus S\} | min \leq x\}
\]
where $min \leq x$ is a $\Ur$ constraint.
\qed
\end{example}

$\LRQ$ allows the definition of RUQ in set-theoretic terms by exploiting the
following identity:
\begin{equation}\label{eq:ruqid}
\forall x \in A: \phi(x) \iff A \subseteq \{x: x \in A \land \phi(x)\}
\end{equation}
In this way, in $\LRQ$ we can define a constraint for RUQ as follows:
\begin{equation}
\Forall(x \in A, \phi) \defs A \subseteq \{x:A | \phi\}
\end{equation}
Then, the formula of Example \ref{ex:min} can be written more compactly as follows:
\[
min \in \{y \plus S\} \land \Forall(x \in \{y \plus S\}, min \leq x)
\]
Likewise, REQ can also be defined as constraints:
\[
\Exists(x \in A, \phi) \defs n \in A \land \phi(n)
\]
where $n$ is a new variable.

Furthermore, in $\LRQ$, the filter of a RIS can be a conjunction of $\Ur$,
$\Forall$ and $\Exists$ constraints. This is an important difference w.r.t.
$\LRIS$ \cite{DBLP:journals/jar/CristiaR21a} because there, RIS filters can
only be $\Ur$ formulas. The possibility of including $\Forall$ and $\Exists$
constraints in RIS filters allows for the definition of nested RQ (what is not
possible in $\LRIS$). For example:
\begin{gather*}
\forall x \in X: (\forall y \in Y: \phi(x,y))) \iff X
  \subseteq \{x: x \in X \land (\forall y \in Y: \phi(x,y))\} \\
\iff X \subseteq \{x: x \in X \land (Y \subseteq \{y: (y \in Y: \phi(x,y))\})\}
\end{gather*}
The latter being equivalent to the following $\LRQ$ formula:
\begin{equation*}
\Forall(x \in X, \Forall(y \in Y, \phi(x,y)))
\end{equation*}
which can be further simplified by introducing some syntactic sugar:
\begin{equation*}
\Forall([x \in X,y \in Y], \phi(x,y))
\end{equation*}

In the next two subsections a formal presentation of $\LRQ$ is made and in
later sections its decidability is analyzed.

\subsection{Syntax}\label{syntax}

The $\LRQ$ syntax is defined primarily by giving the signature upon which terms and
formulas of the language are built.

\begin{definition}[Signature]\label{signature}
The signature $\Sigma_\RQ$ of $\LRQ$ is a triple $\langle
\mathcal{F},\Pi,\Var\rangle$ where:
\begin{itemize}
\item $\mathcal{F}$ is the set of function symbols, partitioned as
$\mathcal{F} = \mathcal{F}_\mathcal{S} \cup \FUr$, where
$\mathcal{F}_\mathcal{S}$ contains $\e$, $\ww$ and $\{\cdot:\cdot | \cdot\}$,
while $\FUr$ contains the function symbols provided by the theory $\Ur$ (at
least, a constant and the binary function symbol $(\cdot,\cdot)$).

\item $\Pi$ is the set of \emph{primitive} predicate symbols, partitioned as
$\Pi = \Pi_\Set \cup \Pi_\mathcal{T} \cup \Pi_\Ur$ where $\Pi_\Set \defs
\{=_\Set,\in,\subseteq\}$ and $\Pi_\mathcal{T}
\defs \{set, \isx\}$, while $\Pi_\Ur$
contains the predicate symbols provided by the theory $\Ur$ (at least $=_\Ur$).
%

\item $\Var$ is a denumerable set of variables, partitioned as
$\Var = \Var_\Set \cup \Var_\Ur$. \qed
\end{itemize}
\end{definition}

Intuitively, $\e$, $\ww$ and $\{\cdot:\cdot | \cdot\}$ are
interpreted as outlined at the beginning of Section \ref{informal}. $=_\Ur$ is
interpreted as the identity in $\iD_\sUr$, while $(\cdot,\cdot)$ will be used
to represent ordered pairs.

Sorts of function and predicate symbols are specified as follows: if $f$
(resp., $\pi$) is a function (resp., a predicate) symbol of arity $n$, then its
sort is an $n+1$-tuple $\langle s_1, \ldots ,s_{n+1} \rangle$ (resp., an
$n$-tuple $\langle s_1, \ldots ,s_n \rangle$) of non-empty subsets of the set
$\{ \sSet , \sUr \}$ of sorts. This notion is denoted by $f:\langle s_1, \ldots
,s_{n+1}\rangle$ (resp., by $\pi:\langle s_1, \ldots ,s_n\rangle $).
Specifically, the sorts of the elements of $\mathcal{F}$ and $\Var$ are the
following. 

\begin{definition}[Sorts of function symbols and variables]\label{d:sorts}
The sorts of the symbols in $\mathcal{F}$ are as follows:
\begin{gather*}
\e: \langle \{\sSet \} \rangle \\
\mathsf{\ww: \langle \{\sUr\},\{ \sSet \} , \{ \sSet\}\rangle } \\
\mathsf{\{\cdot:\cdot | \cdot\}: \langle \{\sUr\}, \{ \sSet \},} \{ \FRQ \}, \mathsf{\{ \sSet\}\rangle } \\
f: \langle \underbrace{\{\sUr\}, \ldots,\{\sUr\}}_n, \{{\sf
\sUr}\}\rangle\text{, if $f \in \FUr$ is of arity $n \ge 0$}.
\end{gather*}
where $\FRQ$ represents the set of $\RQ$-formulas
defined in Definition \ref{formula}. The sorts of variables are as follows:
\begin{gather*}
v: \langle \{\sSet \} \rangle \text{, if $v \in \Var_\Set$} \\
v: \langle \{\sUr \} \rangle \text{, if $v \in \Var_\Ur$}  \tag*{\qed}
\end{gather*}
\end{definition}

\begin{definition}[Sorts of predicate symbols]\label{d:sorts_pred}
The sorts of the predicate symbols in $\Pi$ are as follows:
\begin{gather*}
=_\Set: \langle \{\sSet\} , \{\sSet\} \rangle \\
=_\Ur: \langle \{\sUr\} , \{\sUr\} \rangle \\
\in: \langle \{\sUr \} , \{\sSet \} \rangle \\
\subseteq: \langle \{\sSet \} , \{\sSet \} \rangle \\
set, \isx: \langle \{\sSet,\sUr\} \rangle \tag*{\qed}
\end{gather*}
\end{definition}

Whenever it is clear from context we will write $=$ instead of $=_\Set$ or $=_\Ur$.

\begin{definition}[$\RQ$-terms]\label{RQ-terms}
Let $\TRQ^0$ be the set of terms generated by the
following grammar:
\begin{gather*}
\TRQ^0 ::= \mathit{Elem} \quad|\quad \mathit{Set} \\
\mathit{Elem}::= \TX \quad|\quad \Var_\Ur \\
\mathit{Ctrl} ::=
  \Var_\Ur
  \hspace{2pt}| \ql(\q~~\mathit{Ctrl}~~\ql,\q~~\mathit{Ctrl} \ql)\q \\
Ext ::= \q\e\q \quad|\quad \Var_\Set \quad|\quad \q\{\qr Elem
           \ql\hspace{-2pt}\plus\hspace{-2pt}\qr Ext \ql\}\q \\
\mathit{Ris} ::=
  \ql\{\qr \mathit{Ctrl} \ql:\qr \mathit{Ext}
  \ql\hspace{-2pt}|\hspace{-2pt}\qr \FUE \ql\}\qr \\
Set ::= Ris \quad|\quad Ext
\end{gather*}
where $\TX$ represents the set of non-variable $\Ur$-terms; $\FUE$ is the set
of $\RQ$-formulas defined in Definition \ref{formula}; and variables occurring in a $Ctrl$-term must all be distinct
from each other.

The set of \emph{$\RQ$-terms}, denoted by $\TRQ$, is the maximal
subset of $\TRQ^0$ complying with the sorts as given in Definition
\ref{d:sorts}.
 \qed
\end{definition}

If $t$ is a term $f(t_1,\dots,t_n)$, $f \in \mathcal{F}, n \ge 0$, and $\langle
s_1, \ldots ,s_{n+1} \rangle$ is the sort of $f$, then we say that $t$ is of
sort $\langle s_{n+1} \rangle$. The sort of any $\RQ$-term $t$ is always
$\langle \{\sSet\}\rangle$ or $\langle \{\sUr\} \rangle$. For the sake of
simplicity, we simply say that $t$ is of sort $\sSet$ or $\sUr$, respectively.
In particular, we say that a $\RQ$-term of sort $\sSet$ is a \emph{set term},
that set terms of the form  $\{t_1 \plus t_2\}$ are \emph{extensional} set
terms, and that terms of the form $\{t_1:t_2 | \phi\}$ are RIS terms. The first
argument of an extensional set term is called \emph{element part} and the
second is called \emph{set part}. In turn, the first argument of a RIS term is
called \emph{control term}, the second is the \emph{domain} and the third one
is the \emph{filter}. 

As can be seen in Definition \ref{RQ-terms}, control terms can be either
variables or nested ordered pairs. The utility of the latter will be precisely
motivated and discussed in Section \ref{extensions}. Note that the domain of a
RIS term can be the empty set, a set variable or an extensional set.

Hereafter, we will use the following notation for extensional set terms:
$\{t_1,t_2,\dots,t_n \plus t\}$, $n \ge 1$, is a shorthand for $\{t_1
\plus \{t_2 \,\plus\, \cdots \{ t_n \plus t\}\cdots\}\}$, while
$\{t_1,t_2,\dots,t_n\}$ is a shorthand for $\{t_1,t_2,\dots,t_n \plus \e\}$.

\begin{definition}[$\RQ$-constraints]\label{primitive-constraint}
If $\pi \in \Pi$ is a predicate symbol of sort $\langle s_1, \ldots ,
s_n \rangle$, and for each $i=1,\ldots , n$, $t_i$ is a $\RQ$-term of sort
$\langle s'_i \rangle$ with $s'_i \subseteq s_i$, then:
\begin{enumerate}
\item If $\pi$  is $\subseteq$, then $\pi(t_1,t_2)$ is a $\RQ$-constraint
if $t_2 \equiv \{Ctrl:t_1 | \FUE\}$, where $Ctrl$ and $\FUE$ are as in
Definition \ref{RQ-terms}.
\item If $\pi$ is $\in$, then $\pi(t_1,t_2)$ is a $\RQ$-constraint
if $t_2$ is an $Ext$ term as in Definition \ref{RQ-terms}.
\item If $\pi$ is $=_\Set$, then $\pi(t_1,t_2)$ is a $\RQ$-constraint
if $t_1$ and $t_2$ are $Ext$ terms as in Definition \ref{RQ-terms}.
\item If $\pi$ is any other element of $\Pi$,
then $\pi (t_1,\ldots ,t_n)$ a $\RQ$-constraint.
\end{enumerate}
The set of $\RQ$-constraints is denoted by $\CRQ$.
\qed
\end{definition}

The $\RQ$-constraints based on symbols in $\Pi_\Set$ will be called \emph{set
constraints}. Note that the conditions on $\subseteq$-constraints forces them
to be RUQ as in \eqref{eq:ruqid}.

Finally, we define the set of $\RQ$-formulas as follows.

\begin{definition}[$\RQ$-formulas]\label{formula}
The set of
$\RQ$-formulas, denoted by $\FRQ$, is given by the following grammar:
\begin{gather*}
\FRQ ::=
  \true
  \mid \false
  \mid \FX
  \mid \CRQ
  \mid \FRQ \land \FRQ
  \mid \FRQ \lor \FRQ
\end{gather*}
where $\FX$ and $\CRQ$ represent any element belonging to the class of
$\Ur$-formulas and $\RQ$-constraints, respectively.
\qed
\end{definition}

As can be seen, $\LRQ$ is based solely on fundamental concepts of set theory.

\begin{remark}[Notation]
We will use the following naming conventions, unless stated
differently: $A, B, C, D$
stand for terms of sort $\sSet$;  $a, b, c, d, x, y, z$ stand for terms of sort
$\sUr$; and $t, u, v$ stand for terms of any of the two sorts. A symbol such as
$\dot{A}$ states that $A \in \Var$. Finally, $n, n_i$ stand for new variables
of sort $\sUr$; and $N, N_i$ for new variables of sort $\sSet$; no dot above
them will be used. \qed
\end{remark}

\begin{remark}[$\LRQ$ vs. $\LRIS$]\label{r:lris}
As we have pointed out in Section \ref{sec:intro}, $\LRQ$ departs from $\LRIS$
\cite{DBLP:journals/jar/CristiaR21a}. $\LRQ$ is a sublanguage of $\LRIS$ except
for one modification which extends $\LRIS$. Indeed, $\LRIS$ admits the same
function and predicate symbols than $\LRQ$, plus some other or more complex
versions of them. For example, in $\LRIS$ RIS terms have a more complex
structure and union, intersection, etc. of RIS terms are available.

However, in $\LRIS$ filters can only be $\LX$ formulas. In $\LRQ$ filters can
be nested RQ ending in an $\LX$ formula. This extension
is crucial to extend the expressiveness of the language (cf. formula
\eqref{eq:nested}). The restriction on filters to $\LX$ formulas in $\LRIS$ is
key to define a decision procedure for it. If this restriction is lifted,
termination of the decision algorithm is compromised. As we will shown in
Section \ref{deci}, there are subclasses of formulas admitting nested RQ that
do not compromise termination. \qed
\end{remark}

\subsection{\label{semantics}Semantics}

Sorts and symbols in $\Sigma_{\RQ}$ are interpreted according to the
interpretation structure $\iS \defs \langle \iD,\iF{\cdot}\rangle$, where $\iD$ and
$\iF{\cdot}$ are defined as follows.

\begin{definition} [Interpretation domain] \label{def:int_dom}
The interpretation domain $D$ is partitioned as $\iD \defs \iD_\sSet \cup \iD_\sUr$
where:
\begin{itemize}
\item $\iD_\sSet$ is the set of all hereditarily finite hybrid
sets built from elements in $\iD$. Hereditarily finite sets are those sets that
admit (hereditarily finite) sets as their elements, that is sets of sets.
\item $\iD_\sUr$ is a collection of other objects.
\qed
\end{itemize}
\end{definition}

\begin{definition} [Interpretation function] \label{app:def:int_funct}
The interpretation function $\iF{\cdot}$ is defined as follows:
\begin{itemize}
\item Each sort $\mathsf{S} \in \{\sSet,\sUr\}$ is mapped to
      the domain $\iD_\mathsf{S}$.

\item For each sort $\mathsf{S}$, each variable $x$ of sort $\mathsf{S}$ is mapped to
      an element $x^\iS$ in  $\iD_\mathsf{S}$.

\item The constant and function symbols in $\mathcal{F}_\Set$
are mapped to
elements in $\iD_\mathsf{S}$ as follows:
  \begin{itemize}
  \item $\e$ is interpreted as the empty set, namely $\e^\iS = \emptyset$
  \item $\{ x \plus A \}$ is interpreted as the set $\{x^\iS\} \cup A^\iS$.
  \item Let $\vec{x}$ be a vector of variables occurring in $\ct$
  and $\vv$ a vector of other variables, then the set
  $\{\ct(\vec{x}):X | \flt(\vec{x},\vv)\}$ is interpreted as the set:
\[
\{y : \exists \vec{x} (\ct(\vec{x}) \in_\Ur X \land \flt(\vec{x},\vv))\}
\]
Note that in RIS terms, $\vec{x}$ are ``local'' variables whose scope is the
RIS itself, while $\vv$ are ``non-local'' variables whose scope is the formula
where the RIS is participating in.
  \end{itemize}

\item The predicate symbols in $\Pi$ are interpreted as follows:
  \begin{itemize}
   \item $A =_\mathcal{S} B$ is interpreted as $A^\iS = B^\iS$,
   where $=$ is the identity relation in $\iD_\sSet$
   \item $x =_\Ur y$ is interpreted as $x^\iS = y^\iS$,
   where $=$ is the identity relation in $\iD_\sUr$
   \item $x \in A$ is interpreted as $x^\iS \in A^\iS$
   \item $A \subseteq B$ is interpreted as $A^\iS \subseteq B^\iS$
   \item $\isx(t)$ is interpreted as $t^\iS \in \iD_\sUr$
   \item $set(t)$ is interpreted as $t^\iS \in \iD_\sSet$.
\qed
\end{itemize}
\end{itemize}
\end{definition}

The interpretation structure $\iS$ is used to map each
$\RQ$-formula $\Phi$ to a truth value $\Phi^\iS = \{\true,\false\}$ in
the following way: set constraints (resp., $\Ur$ constraints) are evaluated by
$\iF{\cdot}$ according to the meaning of the corresponding predicates in set
theory (resp., in theory $\Ur$) as defined above; $\RQ$-formulas are evaluated
by $\iF{\cdot}$ according to the rules of propositional logic. A $\LRQ$-formula
$\Phi$ is \emph{satisfiable} iff there exists an assignment $\sigma$ of values
from $\mathcal{D}$ to the free variables of $\Phi$, respecting the sorts
of the variables, such that $\Phi[\sigma]$ is true in $\iS$, i.e., $\iS \models
\Phi[\sigma]$. In this case, we say that $\sigma$ is a \emph{successful
valuation} (or, simply, a \emph{solution}) of $\Phi$.

\section{\label{solver}A Solver for $\LRQ$}

In this section we present a constraint solver for $\LRQ$, called $\SATRQ$. The
solver provides a collection of rewrite rules for rewriting $\LRQ$ formulas
that are proved to be a decision procedure for some subclasses of $\LRQ$
formulas (see Section \ref{deci}). As already observed, however, checking the
satisfiability of $\RQ$-formulas depends on the existence of a decision
procedure for $\Ur$-formulas (i.e., formulas over $\LX$).

\subsection{The Solver}

$\SATRQ$ is a rewriting system whose global organization is shown in Algorithm
\ref{glob}, where $\textsf{STEP}$ is the core of the algorithm.

\textsf{sort\_infer} is used to automatically add $\Pi_\mathcal{T}$-constraints
to the input formula $\Phi$ to force arguments of $\RQ$-constraints
in $\Phi$ to be of the proper sorts (see Remark \ref{rsort} below).
\textsf{sort\_infer} is called twice in Algorithm \ref{glob}: first, at the
beginning of the algorithm, and second, within procedure $\textsf{STEP}$ for
the constraints that are generated during constraint processing.
\textsf{sort\_check} checks $\Pi_\mathcal{T}$-constraints occurring in $\Phi$:
if they are satisfiable, then $\Phi$ is returned unchanged; otherwise, $\Phi$
is rewritten to $\false$.

\algtext*{EndIf}
\algrenewtext{EndProcedure}{\textbf{return} }
\begin{algorithm}
\begin{tabular}{lr}
\begin{minipage}{.55\textwidth}
\begin{algorithmic}[0]
\Procedure{$\mathsf{STEP}$}{$\Phi$}
\State \hspace{5mm}
       \textsf{for all} $\pi \in \Pi_\mathcal{S} \cup \Pi_\mathcal{T}:
            \Phi \gets \mathsf{rw}_{\pi}(\Phi)$;
\State \hspace{5mm}
       $\Phi \gets \mathsf{sort\_check}(\mathsf{sort\_infer}(\Phi))$
\EndProcedure $\Phi$
\end{algorithmic}
\begin{algorithmic}[0]
 \Procedure{$\mathsf{rw}_\pi$}{$\Phi$}
 \If{\text{$\Phi = \mbox{}\dots \land \false \land \cdots$}}
 \State\Return{$\false$}
 \Else
 \Repeat
 \State \textsf{let } $\Phi$ \textsf{ be } $c_1 \land \dots \land c_m$
 \State \textsf{select a $\pi$-constraint $c_i$ in $\Phi$}
 \State \textsf{apply any applicable rule to $c_i$}
 \Until{\textsf{no rule applies to any $\pi$-constraint }}
 \EndIf
 \EndProcedure $\Phi$
\end{algorithmic}
\end{minipage}
&
\begin{minipage}{.35\textwidth}
\begin{algorithmic}[0]
\Procedure{$\SATRQ$}{$\Phi$}
 \State $\Phi \gets \textsf{sort\_infer}(\Phi)$
 \Repeat
   \State $\Phi' \gets \Phi$
   \State $\Phi \gets \textsf{STEP}(\Phi)$
 \Until{$\Phi = \Phi'$}
 \State $\Phi \textbf{ is } \Phi_\mathcal{S} \land \Phi_\Ur$
 \State $\Phi \gets \Phi_\mathcal{S} \land \SATX(\Phi_\Ur)$
\EndProcedure $\Phi$
\end{algorithmic}
\end{minipage}
\end{tabular}
\caption{\label{glob}The $\SATRQ$ solver. $\Phi$ is the input formula.}
\end{algorithm}

\textsf{STEP} applies specialized rewriting procedures to the current formula
$\Phi$ and returns either $\false$ or the modified formula. Each rewriting
procedure applies a few non-deterministic rewrite rules which reduce the
syntactic complexity of $\RQ$-constraints of one kind. Procedure
$\mathsf{rw}_\pi$ in Algorithm \ref{glob} represents the rewriting procedure
for ($\Pi_\mathcal{S} \cup \Pi_\mathcal{T}$)-constraints. The execution of
$\textsf{STEP}$ is iterated until a fixpoint is reached---i.e., the formula
cannot be simplified any further. \textsf{STEP} returns $\false$ whenever (at
least) one of the procedures in it rewrites $\Phi$ to $\false$. In this case, a
fixpoint is immediately detected, since $\textsf{STEP}(\false)$ returns
$\false$.

$\SATX$ is the constraint solver for $\Ur$-formulas. The formula $\Phi$ can be
seen, without loss of generality, as $\Phi_\mathcal{S} \land \Phi_\Ur$, where
$\Phi_\mathcal{S}$ is a pure $\RQ$-formula (basically, a $\RQ$-formula
with with no $\Ur$-formula in it---see Definition \ref{d:subclasses}) and
$\Phi_\Ur$ is an $\Ur$-formula. $\SATX$ is applied only to the $\Phi_\Ur$
conjunct of $\Phi$. Note that, conversely, \textsf{STEP} rewrites only
$\RQ$-constraints, while it leaves all other atoms unchanged. Nonetheless, as
the rewrite rules show, $\SATRQ$ generates $\Ur$-formulas that are conjoined to
$\Phi_\Ur$ so they are later solved by $\SATX$. 

As we will show in Section \ref{deci}, when all the non-deterministic
computations of $\SATRQ(\Phi)$ return $\false$, then we can conclude that
$\Phi$ is unsatisfiable; otherwise, we can conclude that $\Phi$ is satisfiable
and each solution of the formulas returned by $\SATRQ$ is a solution of
$\Phi$, and vice versa.

\begin{remark}\label{rsort}
$\LRQ$ does not provide variable declarations. The sort of variables are
enforced by adding suitable \emph{sort constraints} to the formula to be
processed. Sort constraints are automatically added by the solver.
Specifically, a constraint $set(y)$ (resp., $\isx(y)$) is added for each
variable $y$ which is required to be of sort $set$ (resp., $\sUr$). For
example, given $B = \{y \plus A\}$, \textsf{sort\_infer} conjoins the sort
constraints $set(B)$, $\isx(y)$ and $set(A)$. If the set of function and
predicate symbols of $\LRQ$ and $\LX$ are disjoint, there is a unique
sort constraint for each variable in the formula.
\qed
\end{remark}

\subsection{\label{rules}Rewrite Rules}

The rewrite rules used by $\SATRQ$ are defined as follows.

\begin{definition}[Rewrite rules]\label{d:rw_rules}
If $\pi$ is a symbol in $\Pi_\mathcal{S} \cup \Pi_\mathcal{T}$ and $p$ is a
$\RQ$-constraint based on $\pi$, then a \emph{rewrite rule for
$\pi$-constraints} is a rule of the form $p \lfun \Phi_1 \lor \dots \lor
\Phi_n$, where $\Phi_i$, $i \ge 1$, are $\RQ$-formulas. Each
atom matching $p$ is non-deterministically rewritten to
one of the $\Phi_i$. Variables appearing in the right-hand side but not in the
left-hand side are assumed to be fresh variables, implicitly existentially
quantified over each $\Phi_i$.
Conjunction has higher precedence than disjunction.
\qed
\end{definition}

A \emph{rewriting procedure} for $\pi$-constraints consists of the collection
of all the rewrite rules for $\pi$-constraints. For each rewriting procedure,
\textsf{STEP} selects rules in the order they are listed in Figure
\ref{f:rules}. The first rule whose left-hand side matches the input
$\pi$-constraint is used to rewrite it.

\begin{figure}
\hrule
\begin{gather}
\textsc{Subset} \notag \\
\e \subseteq \risnocp{x:A}{\Fpv(x)}
 \lfun \true \label{forall:empty} \\[1mm]
\set{a}{A} \subseteq \risnocp{x:\set{a}{A}}{\Fpv(x)}
  \lfun
    \Fpv(a)
     \land A \subseteq \risnocp{x:A}{\Fpv(x)}
  \label{forall:iter} \\[1mm]
\dot{A} \subseteq \risnopattern{x}{\dot{A}}{\Fpv(x)}
  \lfun \textsc{irreducible} \label{forall:var} \\[5mm]
\textsc{Membership} \notag \\
a \in \e \lfun \false \\[1mm]
a \in \set{b}{A} \lfun a =_\Ur b \lor a \in A
  \label{eq:inext} \\[1mm]
a \in \dot{A} \lfun \dot{A} = \set{a}{N} \label{eq:invar} \\[5mm]
\textsc{Equality} \notag \\
\e = \e \lfun \true \\[1mm]
\dot{A} = \dot{A} \lfun \true \\[1mm]
B = \dot{A} \lfun \dot{A} = B \text{, if $B \notin \Var$} \\[1mm]
\dot{A} = B \lfun \dot{A} = B \text{, and substitute $A$ by $B$ in the rest of the formula} \label{eq:subs} \\[1mm]
\set{a}{A} = \e \lfun \false \\[1mm]
\e = \set{a}{A} \lfun \false \\[1mm]
\set{a}{A} = \set{b}{B} \lfun \label{eq:eqext} \\
\qquad
  a =_\Ur b \land A = B \notag \\
\qquad
  \lor a =_\Ur b \land \set{a}{A} = B \lor
  a =_\Ur b \land A = \set{b}{B} \lor
  A = \set{b}{N} \land B = \set{a}{N} \notag \\[1mm]
\dot{A} = B \lfun \text{\textsc{irreducible}, if $\dot{A}$ does not occur elsewhere in the formula}
\end{gather}
\hrule \caption{\label{f:rules}Rewrite rules for $\RQ$-constraints }
\end{figure}

Rules whose right-hand side is \textsc{irreducible} indicate that the
constraint at the left-hand side is not rewritten and will remain as it is all
the way to the final answer returned by Algorithm \ref{glob}. In Figure
\ref{f:rules}, we have made explicit equality in $\LX$ by means of $=_\Ur$. All
other instances of $=$ correspond to equality in $\LRQ$ (i.e., set equality).

As shown in Figure
\ref{f:rules}, there are rewriting procedures for $\subseteq$-constraints
(\textsc{Subset}), $\in$-constraints (\textsc{Membership}) and $=$-constraints
(\textsc{Equality}). The \textsc{Membership} and \textsc{Equality} rules deal
only with extensional sets due to the restrictions given in Definition
\ref{primitive-constraint}. Observe that all other constraints generated by the
rules of Figure \ref{f:rules} are $\Ur$-constraints which are dealt with by
$\SATX$. All the rules in the figure are borrowed from the $\LRIS$ solver
\cite{DBLP:journals/jar/CristiaR21a}. This is important because it simplifies
the proof of some important properties of $\SATRQ$ (Section \ref{deci}).

As can be seen, most of the rules are straightforward.
Rule \eqref{eq:eqext} is the main rule of set unification \cite{Dovier2006}.
Set unification is pervasive in other logics developed by the authors
\cite{DBLP:journals/jar/CristiaR20,DBLP:journals/jar/CristiaR21a}. This rule
states when two non-empty, non-variable sets are equal by non-deterministically
and recursively computing four cases. These cases implement the
\emph{Absorption} and \emph{Commutativity on the left} properties of set theory
\cite{Dovier00}. As an example, by applying rule \eqref{eq:eqext} to $\{1\} =
\{1,1\}$ we get: ($1 = 1 \land \e = \{1\}) \lor (1 = 1 \land \{1\} = \{1\})
\lor (1 = 1 \land  \e = \{1,1\}) \lor (\e = \{1 \plus \dot N\} \land  \{1 \plus
\dot N\} = \{1\})$, which turns out to be true (due to the second disjunct).

Rules \eqref{forall:empty}-\eqref{forall:var} process RUQ by implementing
\eqref{eq:ruqid}. Rule \eqref{forall:iter} iterates over all the elements of
the domain of the RIS until it becomes the empty set or a variable. In each
iteration one of the elements of the domain is proved to verify the filter (if
not, the rule fails), and a new iteration is fired. If the domain becomes a
variable the constraint is not processed any more. Note that a constraint such
as $\dot{A} \subseteq \risnopattern{x}{\dot{A}}{\Fpv(x)}$ is trivially
satisfied by substituting $\dot{A}$ by the empty set.

As $\LRQ$, $\SATRQ$ is based solely on fundamental concepts of set theory.

\begin{remark}\label{r:gen}
Observe that when $\subseteq$ are rewritten only the following are generated:
\begin{itemize}
\item $\phi \in \FX$
\item $A \subseteq \{x:A | \phi(x)\}$
\end{itemize}
\qed
\end{remark}

\subsection{\label{irreducible}Irreducible Constraints}

When no rewrite rule is applicable to the current $\RQ$-formula $\Phi$ and $\Phi$
is not $\false$, the main loop of $\SATRQ$ terminates returning $\Phi$ as its
result. This formula can be seen, without loss of generality, as $\Phi_\Set
\land \Phi_\Ur$, where $\Phi_\Ur$ contains all (and only) $\Ur$ constraints
and $\Phi_\Set$ contains all other constraints occurring in $\Phi$.

The following definition precisely characterizes the form of atomic constraints
in $\Phi_\Set$.

\begin{definition}[Irreducible formula]\label{def:solved}
Let $\Phi$ and $\phi$ be $\RQ$-formulas, $A \in \Var_\Set$, $x$ a control term
(thus it is a term of sort $\sUr$) and $t$ a term of sort $\sSet$. A
$\RQ$-constraint $p$ occurring in $\Phi$ is \emph{irreducible} if it has one of
the following forms:
\begin{enumerate}
\item \label{i:icfirst} $A = t$, and neither $t$ nor $\Phi \setminus \{A = t\}$ contain $A$
\item $A \subseteq \risnopattern{x}{A}{\Fpv(x)}$
\end{enumerate}
A $\RQ$-formula $\Phi$ is irreducible if it is $\true$ or if all of its
$\RQ$-constraints are irreducible.
\qed
\end{definition}

$\Phi_\Set$, as returned by $\SATRQ$'s main loop, is an irreducible formula.
This fact can be checked by inspecting the rewrite rules presented in Figure
\ref{f:rules}. This inspection is straightforward as there are no rules
rewriting irreducible constraints and all non-irreducible form constraints are
rewritten by some rule.

It is important to observe that the atomic constraints occurring in $\Phi_\Set$
are indeed quite simple. In particular, all non-variable set terms occurring in
the input formula have been removed, except those occurring as right-hand side
of $=$ constraints.

\section{\label{deci}Decidability of $\LRQ$ Formulas}

In this section we analyze the soundness, completeness and termination
properties of $\SATRQ$ for different subclasses of $\RQ$-formulas.

As we have explained in Remark \ref{r:lris}, $\LRQ$ is a sublanguage of $\LRIS$
except for the fact that $\LRQ$ admits RQ in filters. We also pointed out that
accepting RQ in filters poses termination problems; soundness and completeness
are not affected. Actually, as noted in Section \ref{solver}, all the rules of
Figure \ref{f:rules} are rules borrowed from the $\LRIS$ solver. Hence, we will
briefly analyze soundness and completeness of $\SATRQ$ and will spend more time
analyzing its termination.

As RUQ and REQ play a central role in this work, we provide some syntactic
sugar for them.

\begin{definition}[Restricted Quantifiers]\label{d:rq}
Given a control term $x$, an extensional set term $A$ and a formula $\phi$, a
\emph{restricted universal quantifier} (RUQ), noted $\Forall(x \in A, \phi)$,
is defined as:
\begin{equation}\label{eq:foreach}
\Forall(x \in A, \phi) \defs A \subseteq \{x : A | \phi(x)\}
\end{equation}
Under the same terms, a \emph{restricted existential quantifier} (REQ), noted
$\Exists(x \in A, \phi)$, is defined as:
\begin{equation}\label{eq:exists}
\Exists(x \in A, \phi) \defs n \in A \land \phi(n)
\end{equation}
where all variables occurring in $n$ are fresh variables not occurring
elsewhere in the formula of which the REQ is a part. \qed
\end{definition}

In a RQ: $x$ is called \emph{control term} or \emph{quantified variable}, $A$
is called \emph{domain} and $\phi$ is called \emph{filter} (following the
vocabulary of RIS terms). Note that in both RUQ and REQ, $\phi$ must depend on
$x$.

\begin{definition}[Subclasses of $\FUE$]\label{d:subclasses}
The following are the subclasses of $\FUE$-formulas for which decidability will
be analyzed:
\begin{itemize}
\item
$\Phi_{nrq}$ is the subclass of $\FUE$ whose elements are nested RQ. The
subclasses of formulas to be analyzed will be subclasses of $\Phi_{nrq}$.
\begin{gather*}
\FU^\Ur ::= \Forall(Ctrl \in Ext, \FX)\\
\FE^\Ur ::= \Exists(Ctrl \in Ext, \FX)\\
\Phi_{mix} ::=
  \FU^\Ur
  \mid \FE^\Ur
  \mid \Forall(Ctrl \in Ext, \Phi_{mix})
  \mid \Exists(Ctrl \in Ext, \Phi_{mix})\\
\Phi_{nrq} ::=
  \true
  \mid \false
  \mid \FX
  \mid \Phi_{mix}
  \mid \FRQ \land \FRQ
  \mid \FRQ \lor \FRQ
\end{gather*}
\item $\FU$ is the subclass of $\Phi_{nrq}$ whose elements are built from $\FX$ and nested RUQ.
\begin{gather*}
\FU^p ::= \FU^\Ur \mid \Forall(Ctrl \in Ext, \FU^p) \\
\FU ::=
  \FX
  \mid \FU^p
  \mid \FU \land \FU
  \mid \FU \lor \FU
\end{gather*}
\item $\FE$ is the subclass of $\Phi_{nrq}$ whose elements are built from $\FX$ and nested REQ.
\begin{gather*}
\FE^p ::= \FE^\Ur \mid \Exists(Ctrl \in Ext,\FE^p) \\
\FE ::=
  \FX
  \mid \FE^p
  \mid \FE \land \FE
  \mid \FE \lor \FE
\end{gather*}
\item $\FEU$ is the subclass of $\Phi_{nrq}$ whose elements are built from $\FX$ and
nested RQ where all REQ are \emph{before} all RUQ (if any).
\begin{gather*}
\FEU ::=
  \FX
  \mid \Exists(Ctrl \in Ext,Filter)
  \mid \FEU \land \FEU
  \mid \FEU \lor \FEU \\
Filter ::= \FU^p \mid \FEU^p
\end{gather*}
\item $\FRQ^p$ is the subclass of $\Phi_{nrq}$ whose elements are pure $\RQ$-formulas.
\begin{gather*}
\FRQ^p ::=
  \true
  \mid \false
  \mid \Phi_{mix}
  \mid \FRQ^p \land \FRQ^p
  \mid \FRQ^p \lor \FRQ^p
\end{gather*}
Similar definitions can be given for pure $\FU$, $\FE$ and $\FEU$ formulas.
\qed
\end{itemize}
\end{definition}

\begin{example}
Different classes of formulas.
\begin{itemize}
\item $\Forall(x \in A, \Forall(y \in B, x = y))$ is a $\FU$ formula.
\item $\Exists(x \in A, \Exists(y \in B, \lnot x = y))$ is a $\FE$ formula.
\item $\Exists(x \in A, \Forall(y \in B, x = y))$ is a $\FEU$ formula.
\item $\Forall(x \in A, \Exists(y \in B, x = y))$
is a $\FRQ^p$ formula. \qed
\end{itemize}
\end{example}

As can be seen, formulas in $\Phi_{nrq}$ are conjunctions and disjunctions of
$\Ur$-formulas and nested RQ; the filter of the innermost RQ is an
$\Ur$-formula. Note that not every $\RQ$-constraint can be part of a formula in
$\Phi_{nrq}$; the idea is to restrict them to be RQ. Hence, basically, we
analyze the decidability of $\RQ$-formulas strictly encoding RQ. However, note
that when $\SATRQ$ processes a $\Phi_{nrq}$-formula it may generate a formula
outside $\Phi_{nrq}$. For example, $\Exists(x \in \dot{A}, \phi(x))$ is
rewritten into $n \in \dot{A} \land \phi(n)$ which then is rewritten into $A =
\{n \plus N\} \land \phi(n)$, which is not a $\Phi_{nrq}$-formula due to the
presence of $A = \{n \plus N\}$.

\begin{remark}[Notation]
From now on, we will write $\Forall([x \in A,y \in B], \phi)$ as a shorthand
for $\Forall(x \in A, \Forall(y \in B, \phi))$, and $\Exists([x \in
A,y \in B], \phi)$ as a shorthand for $\Exists(x \in A, \Exists(y \in B,
\phi))$. Besides, $\Forall(\vec{x}_n \in \vec{A}_n, \phi)$ denotes
$\Forall([x_1 \in A_1, \dots x_n \in A_n], \phi)$ and $\Exists(\vec{x}_n \in
\vec{A}_n, \phi)$ denotes $\Exists([x_1 \in A_1, \dots x_n \in A_n], \phi)$, $0
\leq n$ (if $n = 0$ we take $\phi$ as the resulting formula).
\qed
\end{remark}

With this notation it is easy to see that: in pure $\FU^p$ formulas there are
only constraints of the form $\Forall(\vec{x}_n \in \vec{A}_n, \phi)$ for some
$\Ur$-formula $\phi$; in pure $\FE^p$ formulas there are only constraints of
the form $\Exists(\vec{x}_n \in \vec{A}_n, \phi)$ for some $\Ur$-formula
$\phi$; and in pure $\FEU^p$ formulas there are only constraints of the form
$\Exists(\vec{x}_n \in \vec{A}_n,\Forall(\vec{y}_m \in \vec{B}_m, \phi))$ for
some $\Ur$-formula $\phi$ ($0 < n, 0 \leq m$).

\subsection{\label{sandc}Soundness and Completeness}
The following theorem ensures that, after termination, the
rewriting process implemented by $\SATRQ$ preserves the set of solutions of
the input formula.

\begin{teorema}[Equisatisfiability]\label{equisatisfiable}
Let $\Phi$ be a $\Phi_{nrq}$-formula and $\Phi^1, \Phi^2,\dots,\Phi^n$ be the
collection of $\RQ$-formulas returned by $\SATRQ(\Phi)$. Then $\Phi^1 \lor
\Phi^2 \lor \dots \lor \Phi^n$ is equisatisfiable to $\Phi$, that is, every
possible solution\footnote{More precisely, each solution of $\Phi$ expanded to
the variables occurring in $\Phi^i$ but not in $\Phi$, so as to account for the
possible fresh variables introduced into $\Phi^i$.} of $\Phi$ is a solution of
one of the $\Phi^i$s and, vice versa, every solution of one of these formulas
is a solution for $\Phi$.
\end{teorema}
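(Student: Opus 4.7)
The plan is to reduce the theorem to the corresponding equisatisfiability result already established for the $\LRIS$ solver in \cite{DBLP:journals/jar/CristiaR21a}, by verifying that the reduction goes through in spite of the novel feature of $\LRQ$: that RIS filters may themselves contain restricted quantifiers. The overall structure is a standard rule-by-rule induction on the derivation $\Phi \to \Phi^1, \dots, \Phi^n$ produced by $\SATRQ$: if each single rewrite step $p \lfun \Phi_1 \lor \dots \lor \Phi_k$ is locally equisatisfiability-preserving (i.e.\ for every valuation $\sigma$, $\sigma \models p$ iff $\sigma \models \Phi_i$ for some $i$), then so is the whole non-deterministic process, and composing with the final call to $\SATX$ (which by hypothesis is a sound and complete decision procedure for $\FX$) preserves it as well.

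The key observation, already highlighted in Section \ref{solver} and Remark \ref{r:lris}, is that every rule of Figure \ref{f:rules} is syntactically identical to a rule of the $\LRIS$ solver. Therefore, for all rules that do not touch the filter $\phi$ of a RIS at all---namely the \textsc{Membership} rules, the \textsc{Equality} rules, and in particular the set-unification rule \eqref{eq:eqext}---the equisatisfiability proof in \cite{DBLP:journals/jar/CristiaR21a} applies verbatim, because whatever $\phi$ looks like is transparent to the rule. The rules that do mention $\phi$ are the \textsc{Subset} rules \eqref{forall:empty}--\eqref{forall:var}, and among those only rule \eqref{forall:iter} does actual work. There I would unfold the interpretation of $\subseteq$ and of RIS terms given in Definition \ref{app:def:int_funct} to obtain
\[
\{a \plus A\} \subseteq \{x:\{a \plus A\} \mid \phi(x)\}
\iff \phi(a) \land A \subseteq \{x:A \mid \phi(x)\},
\]
an equivalence that holds purely on the basis of the set-theoretic semantics of RIS and $\subseteq$; it never inspects the internal structure of $\phi$, so it is insensitive to whether $\phi$ is a $\Ur$-formula or contains further nested RQ.

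Having verified step-correctness, I would then argue that the alternation of \textsf{STEP} with \textsf{sort\_infer}/\textsf{sort\_check} also preserves solutions: the sort constraints introduced are precisely the constraints forced by Definition \ref{d:sorts}, hence any solution of $\Phi$ already satisfies them, and any solution of the augmented formula is a solution of $\Phi$ restricted to the original variables. Combining these observations by induction on the length of the $\SATRQ$ derivation yields equisatisfiability of $\Phi$ with $\Phi^1 \lor \dots \lor \Phi^n$, modulo the expanded valuations accounting for the fresh variables introduced by rules such as \eqref{eq:invar} and \eqref{eq:eqext} (the caveat mentioned in the footnote of the statement).

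The main obstacle I expect is the bookkeeping around freshness and variable capture when rule \eqref{forall:iter} is applied to a RIS whose filter contains nested RQ. Concretely, when $\phi(x)$ itself contains a constraint of the form $\Forall(y \in B, \psi(x,y))$ where $B$ or $\psi$ may mention variables from the outer context, instantiating $x$ to $a$ in $\phi(a)$ must not capture the bound $y$ nor rebind any non-local variable. This is essentially a capture-avoidance argument and is routine given the naming convention for control terms in Definition \ref{RQ-terms}, but it is the only new subtlety with respect to the $\LRIS$ proof and deserves an explicit verification.
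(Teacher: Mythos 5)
Your proposal is correct and follows essentially the same route as the paper: reuse the per-rule equisatisfiability lemmas from the $\LRIS$ solver, observe that only the \textsc{Subset} rules can be sensitive to filters containing nested RQ, and verify that the semantic equivalence underlying rule \eqref{forall:iter} never inspects the internal structure of $\phi$ (the paper does exactly this by reproducing the $\LRIS$ proof of that rule in Appendix \ref{app:proofs}). Your additional remarks on sort inference and capture avoidance are reasonable bookkeeping but do not change the argument.
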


\begin{proof}
The proof rests on a series of lemmas each showing that the set of solutions of
left and right-hand sides of each rewrite rule are the same. Given that the
rewrite rules of Figure \ref{f:rules} are those used to define the solver for
$\LRIS$, then the lemmas proved for $\LRIS$ still apply \cite[Appendix
C.4]{DBLP:journals/jar/CristiaR21a}. The only concern with those lemmas might
be the fact that they were proved under the assumption that RIS filters do not
admit RQ. However, it is trivial to see that all the \textsc{Membership} and
\textsc{Equality} rules and rules
\eqref{forall:empty} and \eqref{forall:var} are unaffected by the fact that
filters admit RQ. For the remaining rule, i.e.
\eqref{forall:iter}, we reproduce in Appendix \ref{app:proofs} the proof made
for $\LRIS$ so readers can check that it do not depend on any limitation over
RIS filters. \qed
\end{proof}

\begin{teorema}[Satisfiability of the output formula]\label{satisf}
Any $\RQ$-formula different from $\false$ returned by $\SATRQ$ is
satisfiable w.r.t. the underlying interpretation structure $\iS$.
\end{teorema}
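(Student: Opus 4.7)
The plan is to observe that any output of $\SATRQ$ different from $\false$ has the shape $\Phi_\Set \land \Phi_\Ur'$, where $\Phi_\Set$ is irreducible in the sense of Definition \ref{def:solved} and $\Phi_\Ur' = \SATX(\Phi_\Ur)$. Since the whole output is not $\false$, in particular $\Phi_\Ur' \neq \false$, so by correctness of the assumed decision procedure $\SATX$ for $\Ur$-formulas, $\Phi_\Ur'$ (and hence $\Phi_\Ur$) admits a solution $\sigma_\Ur$ valuating the ur-element variables in the underlying $\Ur$-structure. The task therefore reduces to extending $\sigma_\Ur$ to a valuation $\sigma$ of the set variables such that every irreducible constraint of $\Phi_\Set$ is true in $\iS$.

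By Definition \ref{def:solved} (and Remark \ref{r:gen} for the $\subseteq$ case), every irreducible $\RQ$-constraint has one of the two forms: (i) $A = t$ with $A \in \Var_\Set$ occurring neither in $t$ nor anywhere else in $\Phi$; or (ii) $A \subseteq \{x : A \mid \phi(x)\}$ with $A \in \Var_\Set$. I would first extend $\sigma_\Ur$ by assigning $\sigma(A) = \varnothing$ for every set variable $A$ occurring in a constraint of type (ii), and also for every set variable not covered by a constraint of type (i); then, in a second pass, for every constraint of type (i), set $\sigma(A) = (t)^\iS$ using the already-defined $\sigma$. Because in (i) the variable $A$ does not appear in $t$ nor in any other atom, this second pass is well-defined (no cycles, no overwriting) and cannot perturb constraints already satisfied.

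Next I would verify that $\sigma$ satisfies every conjunct: the $\Ur$-part holds by choice of $\sigma_\Ur$ (since sort separation prevents set variables from appearing in $\Phi_\Ur'$); each constraint $A = t$ of type (i) holds by construction; and each constraint $A \subseteq \{x : A \mid \phi(x)\}$ of type (ii) is true because $\sigma(A) = \varnothing$ makes both sides the empty set, irrespective of the (possibly nested) filter $\phi$. Finally, sort correctness is automatic since set variables receive values in $\iD_\sSet$ and ur-element variables values in $\iD_\sUr$.

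The main obstacle is this last point about the filter $\phi$ in the (ii)-constraints. Since $\LRQ$ admits nested RQ inside filters, $\phi$ may itself involve further $\RQ$-constraints and $\Ur$-formulas whose satisfiability was \emph{not} checked by $\SATRQ$ (filters are never ``entered'' by the rewrite rules when the domain remains a variable). The key remark that closes the gap is the set-theoretic identity $\{x : \varnothing \mid \phi(x)\} = \varnothing$, which holds in $\iS$ by the interpretation of RIS given in Definition \ref{app:def:int_funct}, independently of the truth value of $\phi$. This justifies the choice $\sigma(A) = \varnothing$ and concludes the construction of a model of the output formula, proving satisfiability.
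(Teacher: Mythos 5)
Your proof is correct and follows essentially the same route as the paper's: decompose the output into $\Phi_\Set \land \Phi_\Ur$, invoke the correctness of $\SATX$ for the $\Ur$-part, satisfy the irreducible set constraints by assigning the empty set to the domains of the $\subseteq$-constraints and binding each equated variable to its right-hand side, and observe that sort separation prevents any conflict between the two partial valuations. Your write-up is in fact somewhat more careful than the paper's on two points it leaves implicit --- the acyclicity of the equality bindings (guaranteed by clause \ref{i:icfirst} of Definition \ref{def:solved}) and the fact that $\{x : \varnothing \mid \phi(x)\} = \varnothing$ holds regardless of the unexamined nested filter --- but these are elaborations of the same argument, not a different one.
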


\begin{proof}
As we have explained, each disjunct of the formula returned by $\SATRQ$ can be
written as $\Phi_\mathcal{S} \land \Phi_\Ur$, where $\Phi_\mathcal{S}$ is a
pure $\RQ$-formula and $\Phi_\Ur$
is an $\Ur$-formula.

Since $\SATX$ is called on $\Phi_\Ur$ we know that it is satisfiable (under the
assumption that $\SATRQ$ has not returned $\false$).

Now we prove that $\Phi_\mathcal{S}$ is satisfiable, too. We know that
$\Phi_\mathcal{S}$ is an irreducible formula (Definition \ref{def:solved}).
Then, we have to prove that an irreducible formula is always satisfiable. Given
that an irreducible formula is a conjunction of irreducible constraints, we
have to prove that all these constraints can be simultaneously satisfied.
Constraints of the form $\dot{A} = t$  are satisfied by binding $\dot{A}$ to
$t$ (recall from Definition \ref{def:solved} that $\dot{A}$ does not occur 3in
the rest of an irreducible formula); constraints of the form
$\dot{A} \subseteq \risnopattern{x}{\dot{A}}{\Fpv(x)}$ are satisfied by
substituting the domain of the RIS by the empty set. Hence, there is always a
solution for an irreducible $\RQ$-formula.

Finally, we prove that $\Phi_\mathcal{S} \land \Phi_\Ur$ can be satisfied.
Indeed, observe that the solution for $\Phi_\mathcal{S}$ do not bind variables
of sort $\sUr$ and that $\Phi_\Ur$ do not contain variables of sort $\sSet$. So
the values of the solution for $\Phi_\Ur$ do no conflict with the values of the
solution for  $\Phi_\mathcal{S}$. \qed
\end{proof}

The following example shows how Theorem \ref{satisf} works in practice.

\begin{example}
Consider the following nested RUQ where $\phi$ is an $\Ur$-formula:
\begin{equation}\label{eq:exruq2}
\Forall([x \in \{a \plus \dot{A}\}, y \in \{b \plus \dot{B}\}],\phi(x,y))
\end{equation}
$\SATRQ$ applies \eqref{eq:foreach} and rule \eqref{forall:iter} twice
yielding:
\begin{equation*}
\phi(a,b) \land \Forall(y \in \dot{B}, \phi(a,y)) \land \Forall([x \in \dot{A}, y \in \{b \plus \dot{B}\}],\phi(x,y))
\end{equation*}
Now, it calls $\SATX(\phi(a,b))$ because both $\Forall$ constraints are
irreducible. Thus, determining the satisfiability of \eqref{eq:exruq2} is
reduced to determining the satisfiability of $\phi(a,b)$ because satisfiability
of the two $\Forall$ constraints is guaranteed by Theorem \ref{satisf} (with $A
\leftarrow \e, B \leftarrow \e$). \qed
\end{example}

Thanks to Theorems \ref{equisatisfiable} and \ref{satisf} we can conclude that,
given a $\Phi_{nrq}$-formula $\Phi$, then $\Phi$ is satisfiable with respect to
the intended interpretation structure $\iS$ if and only if there is a
non-deterministic choice in $\SATRQ(\Phi)$ that returns a $\RQ$-formula
different from $\false$. Conversely, if all the non-deterministic computations
of $\SATRQ(\Phi)$ terminate with $\false$, then $\Phi$ is surely unsatisfiable.
Note that these theorems have been proved for any $\Phi_{nrq}$-formula.

\subsection{\label{term}Termination}
The problem is that termination of $\SATRQ$ cannot be proved for every
$\Phi_{nrq}$-formula, as shown by the following example.

\begin{example}\label{ex:undec}
The following nested RQ where $\phi$ is an $\Ur$-formula, is rewritten as
indicated.
\begin{gather*}
\Forall(x \in \{a \plus A\}, \Exists(y \in \{b \plus A\},\phi(x,y)))
  \why{rule \eqref{forall:iter}} \\
\lfun
\Exists(y \in \{b \plus A\},\phi(a,y)) \\
\qquad{}\land \Forall(x \in A,\Exists(y \in \{b \plus A\} : \phi(x,y)))
  \why{Def. \ref{d:rq}, \eqref{eq:exists}} \\
\lfun
n \in \{b \plus A\} \land \phi(a,n)
\land \Forall(x \in A,\Exists(y \in \{b \plus A\} : \phi(x,y)))
\end{gather*}
Now there are two cases from $n \in \{b \plus A\}$:  $n = b$ and $n \in A$
(rule \eqref{eq:inext}). Let us see the second one:
\begin{gather*}
n \in A \land \phi(a,n) \land \Forall(x \in A,\Exists(y \in \{b \plus A\} : \phi(x,y)))
  \why{rule \eqref{eq:invar}} \\
\lfun
A = \{n \plus N\} \land \phi(a,n) \\
\qquad{} \land \Forall(x \in A,\Exists(y \in \{b \plus A\} : \phi(x,y)))
  \why{rule \eqref{eq:subs}} \\
\lfun
A = \{n \plus N\} \land \phi(a,n)
\land \Forall(x \in \{n \plus N\},\Exists(y \in \{b,n \plus N\} : \phi(x,y)))
\end{gather*}
It is clear that the last $\Forall$ is structurally equal to the initial
formula. Without more information about $\phi$ this could potentially cause an
infinite loop making $\SATRQ$ not to terminate.
\qed
\end{example}

Before presenting the theorems stating termination on different subclasses of
$\Phi_{nrq}$-formulas, consider the following analysis. Let $\phi$ be a
$\Phi_{nrq}$-formula. If $\phi = \phi_1 \lor \phi_2$, then we prove $\SATRQ$
terminates on $\phi_1$ and then on $\phi_2$. Hence, as concerns termination, we
can consider $\phi$ to be a conjunction of $\Phi_{nrq}$-formulas. In this case
$\phi$ can be written as $\phi_\Set \land \phi_\Ur$ where $\phi_\Set$ is a pure
$\Phi_{nrq}$-formula and $\phi_\Ur$ is a $\Ur$-formula. We need to prove
termination of $\SATRQ$ on $\phi_\Set$, as termination of $\SATX$ on $\phi_\Ur$
is guaranteed by the assumption that $\SATX$ is a decision procedure for $\LX$.
Now, if $\phi_\Set$ is a disjunction, we prove termination for each disjunct.
Hence, in the following theorems we prove termination of $\SATRQ$ on
conjunctions of pure $\RQ$-constraints belonging to different subclasses of
formulas.

\begin{teorema}[Termination on $\FU$ formulas]\label{t:termfu} \label{terminationFU}
The $\SATRQ$ procedure can be implemented as to ensure termination
for every conjunction of pure $\FU$ constraints.
\end{teorema}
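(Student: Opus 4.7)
Following the reduction described in the paragraph preceding the theorem, it suffices to prove that $\SATRQ$ terminates on a conjunction $\Phi_\Set$ of pure $\FU^p$-constraints, i.e., constraints of the form $\Forall(\vec{x}_n \in \vec{A}_n, \phi)$ with $\phi \in \FX$: disjunctions are handled by proving termination on each disjunct, and the $\Ur$-part is delegated to $\SATX$, which terminates by assumption.

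The first key step is a structural-invariance argument: throughout the rewriting of $\Phi_\Set$, every active (non-irreducible) $\RQ$-constraint remains a pure $\FU^p$-constraint. Indeed, the only rules whose left-hand sides match such constraints are the three \textsc{Subset} rules. Rule \eqref{forall:iter} splits $\Forall(x_1 \in \{a \plus A'\}, \psi)$ into $\psi[a/x_1] \land \Forall(x_1 \in A', \psi)$; here $\psi$ is either a $\Ur$-formula (so $\psi[a/x_1]$ is absorbed into $\Phi_\Ur$ and $\Forall(x_1 \in A', \psi) \in \FU^p$) or itself a pure $\FU^p$-constraint (so both resulting conjuncts are pure $\FU^p$). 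Rules \eqref{forall:empty} and \eqref{forall:var} either discharge or freeze the constraint. Crucially, no \textsc{Membership} or \textsc{Equality} rule is ever triggered, which rules out the $\Exists$-driven feedback loop exhibited in Example \ref{ex:undec}; in particular, rule \eqref{eq:invar} (which could enlarge a variable domain) never fires.

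The second step is to define a well-founded termination measure. To each active $\FU^p$-constraint of nesting depth $n$ and outermost-domain cardinality $k$---the number of explicit $\plus$-prefixed elements in $A_1$---assign the lexicographic pair $(n,k)$, and take the multiset of these pairs over $\Phi_\Set$, ordered by the well-founded multiset extension of the lex order on $\mathbb{N} \times \mathbb{N}$. Rule \eqref{forall:iter} replaces $(n,k)$ by at most $\{(n-1,k'),\,(n,k-1)\}$, where $k'$ is the outermost cardinality of $\psi[a/x_1]$ (substitution of an $\Ur$-term into a $\plus$-constructed set preserves cardinality); both new pairs are lex-smaller than $(n,k)$, so the multiset strictly decreases. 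Rules \eqref{forall:empty} and \eqref{forall:var} only remove pairs from the active multiset. Hence the main loop of $\SATRQ$ reaches a fixpoint in finitely many \textsf{STEP} iterations, after which the final call to $\SATX$ on $\Phi_\Ur$ terminates by hypothesis. The main obstacle is the structural-invariance claim---checking that no rule surprises us by producing an $\in$- or $=$-constraint that would reactivate the \textsc{Membership} or \textsc{Equality} machinery---but this follows directly once one observes that the $\FU^p$ grammar excludes $\Exists$.
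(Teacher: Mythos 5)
Your proof is correct, and it reaches the result by a genuinely different route from the paper. The paper argues by induction on the nesting depth $n$ of $\Forall(\vec{x}_n \in \vec{A}_n,\phi)$: the base case and the induction step each do a case analysis on the outermost domain ($\e$, variable, or $\{b \plus B\}$), and termination follows because each application of rule \eqref{forall:iter} recurses on a strictly smaller domain; the conjunction case is then handled by an informal remark that processing one RUQ cannot bind a value to the domain of another. You instead package the whole argument into a single well-founded measure: the multiset of lexicographic pairs (nesting depth, syntactic cardinality of the outermost domain) over all active constraints, which strictly decreases under \eqref{forall:iter} and is only shrunk by \eqref{forall:empty} and \eqref{forall:var}. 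Both arguments pivot on the same key fact --- that rewriting a pure $\FU$ formula generates only $\Ur$-formulas and smaller RUQ, never an $\in$- or $=$-constraint that could enlarge a domain (the paper's Remark \ref{r:gen}); you state this as an explicit structural-invariance claim, which is a good instinct since it is exactly the property that fails for $\FE$ and mixed formulas. What your approach buys is a uniform and more rigorous treatment of the conjunction case (the multiset order handles interleaved processing of several RUQ for free, where the paper relies on a closing observation), at the cost of having to verify that substitution $\psi[a/x_1]$ preserves the syntactic cardinality of inner domains --- which you correctly note. What the paper's induction buys is that it mirrors the grammar of $\FU^p$ directly and needs no auxiliary ordering machinery.
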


\begin{proof}

Recall that the only constraint in pure $\FU$ formulas is of the  form
$\Forall(\vec{x}_n \in \vec{A}_n,\phi)$, $0 < n$ and $\phi \in \FX$. First we
will prove that $\SATRQ$ terminates on these constraints. The proof is by
induction on $n$.
\begin{itemize}
\item Base case. Let $\phi$ be a $\LX$ formula. We will show that $\SATRQ$
terminates on the following RUQ:
\begin{equation}\label{eq:basecase}
\Forall(x \in A, \phi(x))
\end{equation}
\begin{itemize}
\item $A = \e$, this case is trivial as rule \eqref{forall:empty} terminates immediately.
\item $A \in \Var_\Set$, this case is trivial as rule \eqref{forall:var} terminates immediately.
\item $A = \{b \plus B\}$, rule \eqref{forall:iter} is applied to \eqref{eq:basecase} yielding:
\begin{equation*}
\phi(b) \land \Forall(x \in B, \phi(x))
\end{equation*}
The recursive call to $\Forall$ is made with a domain strictly smaller than
$A$.  This is so because the call is made with $B$ and because $\phi(b)$ cannot
bind a value to $B$ since $\phi$ is $\Ur$-formula and $B$ is of sort $\sSet$
(the only way of binding a value to $B$ is by means of $t \in B$ or $B = t$,
for some term $t$, which are not generated during $\FU$ processing, Remark
\ref{r:gen}). Then, $\SATRQ$ will terminate when the `end' of $B$ is reached
(i.e., when a variable or the empty set is found).
\end{itemize}
\item Induction hypothesis. $\SATRQ$ terminates on every constraint of the
form $\Forall(\vec{x}_k \in \vec{A}_k, \phi)$ with $k \leq n$, for any
$\Ur$-formula $\phi$.
\item Induction step. Let $\phi$ be any $\Ur$-formula. We will prove that
$\SATRQ$ terminates on the following constraint:
\begin{equation}\label{eq:inductionstep}
\Forall(x \in A, \Forall(\vec{x}_n \in \vec{A}_n,\phi(x,\vec{x})))
\end{equation}
\begin{itemize}
\item $A = \e$, this case is trivial as rule \eqref{forall:empty} terminates immediately.
\item $A \in \Var_\Set$, this case is trivial as rule \eqref{forall:var} terminates immediately.
\item $A = \{b \plus B\}$, rule \eqref{forall:iter} is applied to \eqref{eq:inductionstep} yielding:
\begin{equation*}
\Forall(\vec{x}_n \in \vec{A}_n,\phi(b,\vec{x})) \land \Forall(x \in B, \Forall(\vec{x}_n \in \vec{A}_n,\phi(x,\vec{x})))
\end{equation*}
$\SATRQ$ terminates on the first conjunct by the induction hypothesis.
Besides, the recursive call in the second conjunct is made with a domain
strictly smaller than $A$. This is so because the call is made with $B$ and
because the first conjunct cannot bind a value to $B$ since it is a RUQ or an
$\Ur$-formula and $B$ is of sort $\sSet$. Then, $\SATRQ$ will terminate when
the `end' of $B$ is reached (i.e., when a variable or the empty set are found).
\end{itemize}
\end{itemize}

Observe that termination depends solely on the size of the domain of the RUQ.
If $\phi_\Set$ is a conjunction of RUQ, then termination of $\SATRQ$ on each
RUQ implies termination of $\SATRQ$ for the whole formula. Indeed, when a given
RUQ is processed it can only generate a shorter RUQ or a $\LX$ formula. In
either case, nothing is generated that can bind a value to a domain. Then, the
domain of a RUQ in $\phi_\Set$ is not affected by the processing of the other
RUQ in the formula. \qed
\end{proof}

Before proving termination of $\SATRQ$ on $\FE$ formulas we need the following
lemma.

\begin{lemma}\label{l:termnorq}
$\SATRQ$ terminates on any $\RQ$-formula without RQ.
\end{lemma}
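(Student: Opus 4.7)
The plan is to observe that if $\Phi$ contains no RQ, then in particular $\Phi$ contains no $\subseteq$-constraints: by Definition \ref{primitive-constraint} every $\subseteq$-constraint must have the shape $t \subseteq \{Ctrl:t | \FUE\}$, which is literally the expansion of a RUQ via \eqref{eq:foreach}. Since $Ris$ terms can only appear in that position, $\Phi$ also contains no $Ris$ terms. Consequently, the only $\RQ$-constraints present in $\Phi$ are $\in$- and $=$-constraints over extensional set terms, plus the sort constraints added by \textsf{sort\_infer}; moreover, inspection of the \textsc{Membership} and \textsc{Equality} rules of Figure \ref{f:rules} shows that none of them introduces a fresh $\subseteq$-constraint or $Ris$ term, so this shape is preserved throughout the rewriting.

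I would then invoke termination directly from the $\LRIS$ solver. The \textsc{Membership} and \textsc{Equality} procedures of $\SATRQ$ are textually the same rules used by the $\LRIS$ solver \cite{DBLP:journals/jar/CristiaR21a}, and termination of those procedures on formulas built from $\in$- and $=$-constraints over extensional sets is already established there, ultimately resting on the termination of set unification \cite{Dovier2006}. Termination of \textsf{sort\_infer} and \textsf{sort\_check} is immediate since each performs a single syntactic pass, and termination of the final $\SATX$ call on the residual $\Ur$-formula is part of the standing assumption that $\Ur$ is decidable. Putting these together yields termination of $\SATRQ(\Phi)$ as a whole.

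The main obstacle, were one to redo the argument from scratch rather than cite $\LRIS$, is rule \eqref{eq:eqext}: its fourth disjunct replaces $\{a \plus A\} = \{b \plus B\}$ by $A = \{b \plus N\} \land B = \{a \plus N\}$ for a fresh $N$, so neither the size of the set terms nor the number of atomic constraints strictly decreases. The standard remedy is a lexicographic measure on the pair consisting of the number of set variables not yet in solved form and the total size of the extensional set terms; this is exactly the measure used in the $\LRIS$ termination proof, so reusing it requires no new work.
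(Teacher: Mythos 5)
Your proposal is correct and follows essentially the same route as the paper: observe that a formula without RQ reduces to membership and equality constraints over extensional sets, and then appeal to the already-established termination of the set-unification rewriting system (the paper cites \cite[Theorem 10.10]{Dovier00} for this). Your additional remarks on why no $\subseteq$-constraints or RIS terms can arise, and on the lexicographic measure handling rule \eqref{eq:eqext}, are sound elaborations of steps the paper leaves implicit.
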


\begin{proof}
Let $\phi$ be a $\RQ$-formula without RQ. Write $\phi$ as $\phi_\Set \land
\phi_\Ur$. Hence, $\phi_\Set$ is comprised solely of membership and equality
constraints. Then, only the \textsc{Membership} and \textsc{Equality} rules of
Figure \ref{f:rules} will be used by $\SATRQ$. These rules have been proved to
constitute a terminating rewriting system elsewhere \cite[Theorem
10.10]{Dovier00}. \qed
\end{proof}

\begin{teorema}[Termination on $\FE$ formulas]\label{t:termfe} \label{terminationFE}
The $\SATRQ$ procedure can be implemented as to ensure termination
for every conjunction of pure $\FE$ constraints.
\end{teorema}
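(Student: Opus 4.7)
The plan is to reduce the theorem to Lemma \ref{l:termnorq} by showing that, once all nested REQ are unfolded, a conjunction of pure $\FE$ constraints becomes a $\RQ$-formula containing no restricted quantifiers at all.

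First I would observe that, by the shorthand introduced just before the theorem, every atomic constraint in a pure $\FE^p$ formula has the form $\Exists(\vec{x}_n \in \vec{A}_n, \phi)$ with $\phi \in \FX$ and $0 < n$. By Definition \ref{d:rq}, REQ is purely syntactic sugar: such a constraint abbreviates
\begin{equation*}
n_1 \in A_1 \land n_2 \in A_2[n_1/x_1] \land \cdots \land n_n \in A_n[n_1/x_1,\dots,n_{n-1}/x_{n-1}] \land \phi(n_1,\dots,n_n),
\end{equation*}
where the $n_i$ are fresh variables. This unfolding is finite (it depends only on the finite nesting depth $n$), it introduces no $\subseteq$-constraint and no RIS term, and after it is applied exhaustively no REQ remains. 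Hence a conjunction of pure $\FE$ constraints is syntactically a conjunction of $\in$-constraints together with one $\Ur$-formula per original constraint.

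Next I would invoke Lemma \ref{l:termnorq}: since the unfolded formula is a $\RQ$-formula without RQ, $\SATRQ$ terminates on it. Because only finitely many definitional expansions are needed to reach that formula, $\SATRQ$ terminates on the original conjunction of pure $\FE$ constraints as well.

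The main point worth stressing — and the only place where the argument could look delicate — is that processing the generated $\in$-constraints cannot bring RQ back into the formula. Rule \eqref{eq:inext} rewrites $n_i \in \{b \plus B\}$ to $n_i =_\Ur b \lor n_i \in B$, and rule \eqref{eq:invar} rewrites $n_i \in \dot A_i$ to $\dot A_i = \{n_i \plus N\}$; both produce only extensional set equalities and further membership atoms, all of which fall within the scope of the \textsc{Membership} and \textsc{Equality} procedures covered by Lemma \ref{l:termnorq}. Contrast this with the pathological Example \ref{ex:undec}, where the offending behaviour arises precisely because rule \eqref{forall:iter} reinstates an outer RUQ whose filter contains an REQ; in the pure $\FE$ case there is no such outer universal wrapper, so no recurrence of the original constraint can be generated and termination is immediate.
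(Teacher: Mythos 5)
Your proposal is correct and follows essentially the same route as the paper: unfold all nested REQ via Definition \ref{d:rq} into a conjunction of $\in$-constraints plus an $\Ur$-formula, observe that the result contains no RQ, and conclude by Lemma \ref{l:termnorq}. Your extra remark that the \textsc{Membership} and \textsc{Equality} rules cannot reintroduce restricted quantifiers is a slightly more explicit justification of a step the paper leaves implicit, but it is the same argument.
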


\begin{proof}
Let $\phi$ be an $\Ur$-formula. Consider the following rewriting:
\begin{gather*}
\Exists([x \in A, y \in B], \phi(x,y))
  \why{Def. \ref{d:rq}, \eqref{eq:exists}} \\
\lfun n_1 \in A \land \Exists(y \in B, \phi(n_1,y))
  \why{Def. \ref{d:rq}, \eqref{eq:exists}} \\
\lfun n_1 \in A \land n_2 \in B \land \phi(n_1,n_2)
\end{gather*}
Then, all REQ are quickly eliminated from the formula. This can be  easily
generalized to $\Exists(\mathbf{x}_n \in \mathbf{A}_n,\phi)$ for any
$\Ur$-formula $\phi$ and any $0 < n$. The resulting non-$\Ur$ subformula is a
$\RQ$-formula without RQ. Hence, by Lemma \ref{l:termnorq}, $\SATRQ$ terminates
on that formula. Given that conjunctions of REQ are rewritten into conjunctions
of formulas such as the last one above, $\SATRQ$ terminates on every
conjunction of pure $\FE$ formulas. \qed
\end{proof}

\begin{teorema}[Termination on $\FEU$ formulas] \label{terminationFEU}
The $\SATRQ$ procedure can be implemented as to ensure termination
for every conjunction of pure $\FEU$ constraints.
\end{teorema}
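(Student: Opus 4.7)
The plan is to combine the strategies of Theorems \ref{terminationFU} and \ref{terminationFE}. As in those proofs, I may assume without loss of generality that the input is a conjunction of pure $\FEU^p$-constraints, each of the canonical shape
\[
\Exists(\vec{x}_n \in \vec{A}_n,\, \Forall(\vec{y}_m \in \vec{B}_m,\, \phi))
\]
with $\phi$ a $\Ur$-formula, $0 < n$ and $0 \leq m$ (the degenerate case $m = 0$ falls back to Theorem \ref{terminationFE}, while $n = 0$ is excluded by the very definition of $\FEU^p$).

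First, I would apply the REQ-elimination steps from the proof of Theorem \ref{terminationFE}: iterated unfolding of Definition \ref{d:rq} rewrites the outer $\Exists$-block, in a bounded number of steps, into
\[
n_1 \in A_1 \land \dots \land n_n \in A_n \land \Forall(\vec{y}_m \in \vec{B}_m,\, \phi(\vec{n},\vec{y})),
\]
where the $n_i$ are fresh $\sUr$-variables. This phase introduces no new RQ and produces no set-level atoms other than the membership literals displayed.

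Next, I would let $\SATRQ$ process the atoms $n_i \in A_i$ using only the \textsc{Membership} and \textsc{Equality} rules, which form the purely extensional fragment whose termination is exactly Lemma \ref{l:termnorq}. The only side-effect of this phase that reaches the remaining RUQ block comes from rule \eqref{eq:invar} followed by rule \eqref{eq:subs}: a domain variable $\dot{A}_i$ may become bound to an extensional term $\{n_i \plus N\}$ with $N$ fresh, and if that same variable occurs as some $B_j$, then $B_j$ is replaced accordingly.

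The main obstacle is to show that this interaction with the inner RUQ is benign. Here I would invoke Remark \ref{r:gen}: once REQ elimination and membership processing have finished, every further rewrite the solver can perform on the inner RUQ produces either a $\Ur$-formula or another $\subseteq$-constraint of the same canonical shape, never a membership or set-equality atom that could re-bind a domain variable. Hence the multiset of inner RUQ domains is frozen after the membership phase, each such domain is either $\e$, a variable (immediately handled by rule \eqref{forall:var}), or a finite extensional set possibly with a variable tail, and the induction on the size of RUQ domains carried out in the proof of Theorem \ref{terminationFU} applies verbatim. Termination of the full conjunction then follows because distinct pure $\FEU^p$-constraints can interact only through shared variables, and the same freezing argument shows that no conjunct can enlarge a domain appearing in another.
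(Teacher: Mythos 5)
Your proposal is correct and follows essentially the same route as the paper's proof: eliminate the leading REQ block into membership atoms, apply Lemma \ref{l:termnorq} to that extensional fragment, propagate the resulting bindings into the inner RUQ domains, and then invoke Theorem \ref{terminationFU} together with Remark \ref{r:gen} to conclude that the remaining RUQ processing cannot re-bind any domain, both for a single constraint and for conjunctions. The only difference is presentational: you make the ``frozen domains'' observation slightly more explicit than the paper does.
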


\begin{proof}
Recall that the only constraints in pure $\FEU$ formulas are of the form:
\begin{equation}\label{eq:feuconstraint}
\Exists(\vec{x}_k \in \vec{A}_k,\Forall(\vec{y}_m \in \vec{B}_m, \phi))
\end{equation}
for some $\Ur$-formula $\phi$, $0 < k$ and $0 \leq m$ (if $m = 0$, then $\phi$
is the innermost filter).

First we prove termination on such a constraint. By using the same reasoning of
Theorem \ref{t:termfe}, \eqref{eq:feuconstraint} is rewritten into:
\begin{equation}\label{eq:t51}
n_1 \in A_1 \land \dots \land n_k \in A_k \land \Forall(\vec{y}_m \in \vec{B}_m, \phi)
\end{equation}

Then, $\SATRQ$ process $n_1 \in A_1 \land \dots \land n_k \in A_k$. By Lemma
\ref{l:termnorq}, $\SATRQ$ terminates on that conjunction. The processing of
this conjunction either terminates in $\false$, and so $\SATRQ$ stops, or it
yields a conjunction of the form:
\begin{equation}\label{eq:t52}
(\bigwedge_{i = 1}^{v} X_i = t_i) \land (\bigwedge_{i = 1}^{w} Y_i = u_i)
\end{equation}
where $X_i \in \Var_\Set$, $Y_i \in \Var_\Ur$, $t_i$ are set terms and $u_i$
are $\Ur$ terms.

If some $B_j$ in $\vec{B}$ is either $X_i$ or $\{\dots\plus X_i\}$, then $X_i$
is substituted by $t_i$. This rewrites $\Forall(\vec{y}_m \in \vec{B}_m, \phi)$
into $\Forall(\vec{y}_m \in \vec{B}'_m, \phi)$. Then we have the following
formula:
\begin{equation}\label{eq:t53}
(\bigwedge_{i = 1}^{v} X_i = t_i) \land (\bigwedge_{i = 1}^{w} Y_i = u_i) \land \Forall(\vec{y}_m \in \vec{B}'_m, \phi)
\end{equation}

By Theorem \ref{t:termfu} $\SATRQ$ terminates on $\Forall(\vec{y}_m \in
\vec{B}'_m, \phi)$.  While processing the RUQ, $\SATRQ$ can only generate
either RUQ or $\Ur$-formulas (Remark \ref{r:gen}). Then, the main loop of
$\SATRQ$ terminates. It only remains to call $\SATX$ on the $\Ur$-subformula,
which will terminate under the assumption that it is a decision procedure.
Hence, $\SATRQ$ terminates on \eqref{eq:feuconstraint}.

Now we prove that $\SATRQ$ terminates on a conjunction of constraints such as
\eqref{eq:feuconstraint}. We can think that $\SATRQ$ will process each such
constraint by going through formulas \eqref{eq:t51}-\eqref{eq:t53}. Processing
each final RUQ can only generate either RUQ or $\Ur$-formulas. Then, the main
loop of $\SATRQ$ terminates. \qed
\end{proof}

Now we consider a more general subclass of $\Phi_{nrq}$-formulas which,
however, must obey a restriction concerning the domains of REQ that go
\emph{after} RUQ in mixed RQ (some times called \emph{alternating quantifiers}
\cite{DBLP:journals/lmcs/FeldmanPISS19}).

We say that an RQ has a \emph{variable domain} if its domain is either a
variable or an extensional set whose set part is a variable. Moreover, we say
that the variable of the domain is the domain itself (if it is a variable) or
its set part. For example, in:
\begin{equation}\label{eq:rqex}
\Forall(x \in \{h \plus \dot{A}\}, \Exists(y \in \dot{B}, \phi(x,y)))
\end{equation}
$\{h \plus \dot{A}\}$ is a variable domain whose variable is $A$, and $\dot{B}$
is variable domain whose variable is $B$.

The class of formulas we are about to define will avoid formulas such as the
one in Example \ref{ex:undec}. The problem with that formula is that there is
an $\Exists$ constraint \emph{after} a $\Forall$ constraint with the same
domain variable ($A$). In this situation when the $\Forall$ constraint picks an
element ($a$) of its domain the $\Exists$ constraint hypothesizes the existence
of a new element ($n$) in $A$ as to satisfy $\phi$. As now $n \in A$, the
$\Forall$ constraint must pick $n$ making the $\Exists$ constraint to
hypothesize the existence of another new element in $A$. This behavior may
produce an infinite rewriting loop. In a sense, the $\Exists$ constraint feeds
back the $\Forall$ constraint with new elements \emph{if they have the same
domain variable}. This problem can be generalized to conjunctions of RQ.

\begin{example}\label{ex:undec2}
The following formula:
\begin{gather*}
\Forall(x \in \{h \plus \dot{A}\}, \Exists(y \in \dot{B}, \phi(x,y))) \\
{}\land
\Forall(z \in \dot{B}, \Exists(w \in \{b \plus \dot{A}\},\psi(z,w)))
\end{gather*}
may produce an infinite loop even though the $\Forall$ and $\Exists$
constraints sharing the same domain variable ($A$) are in different RUQ. Still,
in a sense, the $\Exists$ constraint with domain variable $A$ is \emph{after}
the $\Forall$ constraint with the same domain variable: from the $A$ in the
$\Forall$ we go to the $B$ in the $\Exists$, from this we go to the $B$ in the
$\Forall$ which leads us to the $A$ in the $\Exists$. \qed
\end{example}

Therefore, the mathematics we are going to define are meant to characterize
formulas such as those in Examples \ref{ex:undec} and \ref{ex:undec2}.

Let $\phi_1 \land \dots \land \phi_n$ be a conjunction of nested RQ. Each RQ in
a nested RQ is indexed by its position in the chain. For instance, in
\eqref{eq:rqex} the $\Forall$ constraint has index 1 while the $\Exists$
constraint has index 2. For each $\phi_i$ build the function, called
\emph{domain function of $\phi_i$}, whose ordered pairs are of the form
$((i,j),(\mathsf{Q}_j^i,D_j^i))$ where:
\begin{itemize}
\item A pair with first component $(i,j)$ is in the domain function
of $\phi_i$ iff the RQ with index $j$ in $\phi_i$ has a variable domain.
\item $D_j^i$ is the domain variable of the RQ with index $j$ in $\phi_i$.
\item $\mathsf{Q}_j^i$ is $\forall$ if the $j$ RQ is a $\Forall$ constraint
and is $\exists$ if the RQ is an $\Exists$ constraint, in $\phi_i$.
\end{itemize}
Hence, the domain function of the formula of Example \ref{ex:undec2} is:
\begin{equation*}
\{((1,1),(\forall,A)), ((1,2),(\exists,B)),((2,1),(\forall,B)), ((2,2),(\exists,A))\}
\end{equation*}

From the domain functions build a directed graph, called \emph{domain graph},
whose nodes are the ordered pairs of the domain functions. The edges are built
as follows:
\begin{enumerate}
\item If $((i,j),(\forall,D_{j}^i))$ and $((i,k),(\exists,D_{k}^i))$, with $j < k$,
are in a domain function, then $((i,j),(\forall,D_{j}^i)) \fun
((i,k),(\exists,D_{k}^i))$ is an edge of the domain graph.
%
\item If $((i,j),(\exists,D))$ and $((b,a),(\forall,D))$ are in domain functions
with $i \neq b$, then $((i,j),(\exists,D)) \fun ((b,a),(\forall,D))$ is an edge
of the domain graph.
\end{enumerate}
Hence, the domain graph of the formula of Example \ref{ex:undec2} is:
\begin{gather*}
((1,1),(\forall,A)) \fun ((1,2),(\exists,B)), ((2,1),(\forall,B)) \fun ((2,2),(\exists,A)), \\
((1,2),(\exists,B)) \fun ((2,1),(\forall,B)), ((2,2),(\exists,A)) \fun ((1,1),(\forall,A))
\end{gather*}

\newcommand{\fe}{\forall\hspace{-2pt}\exists}

Consider the domain graph of a conjunction of pure $\RQ$-constraints. A path in the graph such as:
\begin{gather*}
((i_1,j_{11}),(\forall,D)) \fun ((i_1,j_{12}),(\exists,D_1)) \;\fun\\
((i_2,j_{21}),(\forall,D_1)) \fun ((i_2,j_{22}),(\exists,D_2)) \fun\\
\cdots\cdots\cdots\cdots\cdots\cdots
\cdots\cdots\cdots\cdots\cdots\cdots\cdot\cdot\fun \\
((i_n,j_{n1}),(\forall,D_{n-1})) \fun ((i_n,j_{n2}),(\exists,D))
\end{gather*}
where $i_a \neq i_b$ if $a \neq b$ for all $a,b \in [1,n]$, is called a $\fe$
\emph{loop}.  Note that the first and last domain variables in a $\fe$ loop are
the same ($D$).

We say that a conjunction of RQ is free of $\fe$ loops if there are no $\fe$
loop in its domain graph. $\Phi_{\fe}$ is the set of conjunctions of RQ free of
$\fe$ loops. Note that $\Phi_{\fe}$ includes all the $\Phi_{nrq}$-formulas
where there are no variable domains.

\begin{teorema}[Termination on $\Phi_{\fe}$ formulas] \label{t:termFUE}
The $\SATRQ$ procedure can be implemented as to ensure termination
for every formula in $\Phi_{\fe}$.
\end{teorema}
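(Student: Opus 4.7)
The plan is to extend the termination arguments of Theorems \ref{t:termfu}--\ref{terminationFEU} by using the absence of $\fe$-loops to supply a well-founded measure that strictly decreases under every rewriting step. As in those proofs, I first reduce to the case where the input is a conjunction $\Phi = \phi_1 \land \cdots \land \phi_n$ of pure RQ constraints: disjunctions can be processed one branch at a time, and the $\Ur$-part is ultimately handed to the assumed decision procedure $\SATX$.

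The key observation is that rules \eqref{forall:empty} and \eqref{forall:var} immediately close off a $\Forall$-constraint whose domain has been reduced to $\e$ or to a variable, while rule \eqref{forall:iter} strictly shrinks the explicit set-part of a $\Forall$-domain. The only way $\SATRQ$ can diverge on $\Phi$ is therefore if some $\Forall$-domain variable $D$ is repeatedly enlarged by substitutions $D \mapsto \{n \plus N\}$ generated, via rules \eqref{eq:invar} and \eqref{eq:subs}, by an $\Exists$ that hypothesizes a fresh element of $D$ (Examples \ref{ex:undec} and \ref{ex:undec2}). I would show that any such enlargement chain must follow edges of the domain graph $G$ of $\Phi$ in the same alternating pattern that defines a $\fe$-loop; since $\Phi \in \Phi_{\fe}$, $G$ has no $\fe$-loop, so every such chain has bounded length and there is a well-defined, finite longest-$G$-path from each $\forall$-node.

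Building on this, I would define a lexicographic termination measure $(\mu_1,\mu_2,\mu_3)$ on rewriting states, where $\mu_1$ is the multiset, over all $\Forall$-nodes of $G$, of the size of the explicit set-part of the corresponding domain weighted by the length of the longest $G$-path departing from that node (finite by $\fe$-loop-freeness); $\mu_2$ is the number of $\Exists$-constraints remaining in the formula; and $\mu_3$ is the syntactic size of the residual, RQ-free part. Rule \eqref{forall:iter} strictly decreases $\mu_1$; the $\Exists$-elimination chain strictly decreases $\mu_2$ as in Theorem \ref{t:termfe}; and the remaining \textsc{Membership} and \textsc{Equality} rules strictly decrease $\mu_3$ by Lemma \ref{l:termnorq}, without increasing $\mu_1$ beyond its $G$-path bound. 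Termination of $\SATRQ$'s main loop then follows, and the final call to $\SATX$ terminates by assumption.

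The main obstacle will be justifying the weighting used in $\mu_1$: one must show that whenever a substitution $D \mapsto \{n \plus N\}$ enlarges a $\Forall$-domain, the fresh elements only trigger $\Exists$-constraints whose domain variables are strictly downstream of $D$ in $G$, so that the corresponding path-length weight strictly decreases. This requires a structural induction on the nested RQ showing that the $\forall\to\exists$ and $\exists\to\forall$ edges arising dynamically during rewriting are already captured by $G$ built from the initial formula, so the weights behave as invariants. Once this invariant is in place, well-foundedness of the lexicographic product of multiset orderings over $\mathbb{N}$ yields termination on $\phi_1\land\cdots\land\phi_n$, and the theorem follows.
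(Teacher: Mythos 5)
Your proposal is correct in its essentials and rests on the same key insight as the paper's proof: after stripping leading REQ as in Theorem \ref{terminationFEU}, the only source of divergence is an $\Exists$ that feeds a fresh element back into the domain variable of a $\Forall$, and $\fe$-loop-freeness of the domain graph rules out any such feedback cycle. Where you differ is in packaging: the paper argues informally, node by node, that ``no $n \in A$ constraint can affect a $\Forall$ domain variable'' in a way that closes a cycle, and then simply reuses the argument of Theorem \ref{t:termfu}; you instead propose an explicit well-founded lexicographic/multiset measure, which is more work but makes the bounded-feedback argument checkable and, usefully, isolates the one lemma the paper also leaves implicit --- that the $\forall\to\exists$ and $\exists\to\forall$ dependencies arising \emph{dynamically} during rewriting are already edges of the graph $G$ built from the initial formula. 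Two points need care if you carry this out. First, $\mu_1$ as literally stated (domain size multiplied by path length, one number per node) does not decrease under the multiset ordering when a downstream domain is enlarged; you should instead take the multiset containing, for each $\Forall$-node, one copy of its longest-path weight \emph{per element} of its explicit domain part, so that rule \eqref{forall:iter} removes one copy of a weight $w$ and the induced enlargements add finitely many copies of weights strictly below $w$ --- that is a genuine multiset decrease. Second, note that acyclicity of $G$ restricted to $\fe$-chains (not mere absence of shared variables) is what makes the longest-path weights well defined; for a single atomic constraint this degenerates to the paper's ``no domain variable shared between a RUQ and a REQ ahead of it,'' but for conjunctions the paths may legitimately pass through several conjuncts, exactly as in Example \ref{ex:undec2}.
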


\begin{proof}
First we prove that $\SATRQ$ terminates on an atomic formula, $\psi$.  It
starts by removing from $\psi$ all the leading $\Exists$ constraints (if any)
and then proceeding as in Theorem \ref{terminationFEU}. Hence, we get a formula
such as \eqref{eq:t53}:
\begin{equation*}
(\bigwedge_{i = 1}^{v} X_i = t_i) \land (\bigwedge_{i = 1}^{w} Y_i = u_i) \land \Forall(\vec{y}_m \in \vec{B}'_m, \Phi)
\end{equation*}
but where $\Phi$ is a mixed RQ (eventually ending in a $\Ur$-formula) whose
domains might have been changed during the substitution step (see proof of
Theorem \ref{terminationFEU}).

Now, $\SATRQ$ processes the $\Forall$ constraint as in Theorem \ref{t:termfu}.
Here, though, we cannot easily conclude that RUQ processing cannot bind a value
to $\vec{B}'$ because after the leading RUQ there might be some REQ. The
problem with REQ is that they generate constraints of the form $n \in A$, for
some domain $A$. Then, if the variable of $A$ happens to
be a variable in $\vec{B}'$ (or in some RUQ in $\Phi$) we will have an infinite
loop as in Examples \ref{ex:undec} and \ref{ex:undec2}. However, since $\psi$
belongs to $\Phi_{\fe}$, we know that there is no domain variable shared
between a RUQ and a REQ ahead of it because $\psi$ is free from $\fe$ loops.
Hence, we can arrive at the same conclusion of Theorem \ref{t:termfu} meaning
that $\SATRQ$ terminates on $\psi$.

Now we prove that $\SATRQ$ terminates on a conjunction of constraints such as
$\psi$. Again, all the leading $\Exists$ constraints (if any) are removed from
each $\psi_i$ thus generating a formula such as \eqref{eq:t53} but with a
conjunction of $\Forall$ constraints\footnote{$\Forall_i$ means that all its
elements are renamed accordingly: $\vec{y}_{i,m_i}$, $\vec{B}'_{i,m_i}$,
$\Phi_i$.}:
\begin{equation*}\label{eq:tmix1}
(\bigwedge_{i = 1}^{v} X_i = t_i) \land (\bigwedge_{i = 1}^{w} Y_i = u_i) \land (\bigwedge_{i = 1}^{q} \Forall_i(\vec{y}_m \in \vec{B}'_m, \Phi))
\end{equation*}
with $0 \leq v,w,q$.

As above, $\SATRQ$ processes all the $\Forall_i$ (as in Theorem \ref{t:termfu})
and, again, the problem are the $n \in A$ constraints ($A$ a variable domain)
that might be generated by the possible REQ present in each $\Phi_i$.
Differently from the base case, here a $n \in A$ constraint generated when
$\Forall_a$ is processed might affect a domain variable of a $\Forall$
constraint present in $\Forall_b$ with $a \neq b$, as shown in Example
\ref{ex:undec2}. However, since the formula belongs to $\Phi_{\fe}$ we know is
free from $\fe$ loops. This includes loops starting with a $\Forall$
domain variable in $\Forall_a$ and ending with the same domain variable in an
$\Exists$ constraint present in $\Forall_b$. Then, no $n \in A$ constraint can
affect a $\Forall$ domain variable. Therefore, $\SATRQ$ terminates. \qed
\end{proof}

We close this section with the following two observations.

\begin{remark}
$\SATRQ$ terminates for some formulas in $\Phi_{nrq} \setminus \Phi_{\fe}$. For
example,  the following is a slight variation of the formula of Example
\ref{ex:undec}:
\begin{equation*}
\Forall(x \in \dot{A}, \Exists(y \in \{b \plus \dot{A}\},\phi(x,y)))
\end{equation*}
for which $\SATRQ$ trivially terminates because the formula is irreducible
given that the domain is a variable (rule \eqref{forall:var} applies). We could
have tighten the definition of $\Phi_{\fe}$ as to include this kind of
formulas. However, we consider that these formulas do not constitute a proper
subclass as termination depends on whether or not some RUQ remain irreducible
throughout the constraint solving procedure. \qed
\end{remark}

\begin{remark}\label{r:restrictions}
$\LRQ$ has been designed by imposing some restrictions on its fundamental
elements,  namely: $\Var_\Set \cap \Var_\Ur = \emptyset$  and $\Pi_\Set \cap
\Pi_\Ur = \emptyset$. These restrictions are used to prove Theorems
\ref{satisf}-\ref{t:termFUE}. However, they can be relaxed to some extent as to
accept a wider class of theories as the parameter for $\LRQ$.

It is possible to accept an $\LX$ such that $\Var_\Set \cap \Var_\Ur \neq
\emptyset$  and $\Pi_\Set \cap \Pi_\Ur \neq \emptyset$ provided the solutions
returned by $\SATX$ are compatible with the irreducible form of Definition
\ref{irreducible}. That is, if $\SATX$ returns, as part of its solutions, a
conjunction of constraints including set variables, this conjunction must be
satisfiable by substituting all set variables by the empty set. This would be
enough as to prove Theorem \ref{satisf}.

Along the same lines, if $\Pi_\Ur$ contains $\in$ then termination of $\SATRQ$
might be compromised as $\SATX$ might generate $\in$-constraints where the
right term is the domain of a $\Forall$ (much as when an $\Exists$ is after a
$\Forall$). This can be generalized to any predicate symbol in $\Pi_\Ur$ that
can bind values to set terms. In this case Theorems
\ref{t:termfu}-\ref{t:termFUE} can be proved if $\Forall$ domains are not
affected by the constraints generated by $\SATX$.

As we will shown in Section \ref{practice} there are expressive $\LX$ such
that $\Var_\Set \cap \Var_\Ur \neq \emptyset$ and $\Pi_\Set \cap \Pi_\Ur \neq
\emptyset$ for which $\SATRQ$ is still an effective solver. \qed
\end{remark}

\section{\label{extensions}Avoiding Existential Variables Inside RQ}

All the considerations made in this section concerning RUQ apply equally to
REQ. The concepts introduced here are adapted from those developed by us for
RIS \cite[Section 6]{DBLP:journals/jar/CristiaR21a}.

Assume $R$ is a set of ordered pairs. We can try to write a formula stating
that $R$ is the identity relation:
\begin{equation}\label{eq:ex}
\Forall(x \in R, x = (e,e))
\end{equation}
where $e$ is intended to be a variable existentially quantified inside the RUQ.
As defined in Section \ref{lrq}, $\LRQ$ does not allow to introduce these
variables and so \eqref{eq:ex} would not be one of its formulas. Besides, it is
not clear what a quantification domain could be for $e$. However, the following
is an $\RQ$-formula stating the same property:
\begin{equation*}
\Forall((x,y) \in R, x = y)
\end{equation*}
Note that we use an ordered pair as the control term (see Definition
\ref{RQ-terms}). Precisely, allowing (nested) ordered pairs as control terms
makes it possible to avoid many existential variables inside RUQ. As binary
relations are a fundamental concept in Computer Science
\cite{DBLP:journals/jar/Givant06,DBLP:conf/RelMiCS/BerghammerHS14}, the
introduction of ordered pairs as control terms is sensible as it enables to
quantify over binary relations, without introducing existential variables.

Even if we allow (unrestricted) existential variables inside RQ, it is
important to avoid them because the negation of such a RQ would not be a $\LRQ$
formula. Indeed, if there are existential variables inside the RQ the negation
will introduce a universally quantified formula, which is not a $\LRQ$ formula.
For instance,
assuming $R$ is a set of numeric ordered pairs, the following is a predicate
stating that the sum of any of its elements is greater than $z$:
\begin{equation}\label{eq:sum}
\Forall((x,y) \in R, sum(x,y,n) \land z < n)
\end{equation}
where $sum(x,y,n)$ is interpreted as $n = x + y$, and $n$ is intended to
be a variable existentially quantified inside the RUQ. Here,
control terms do not help and, again, we do not have a quantification domain
for $n$. Then, the negation of this formula would inevitably introduce a
universal quantification. Furthermore,  it would negate $sum(x,y,n)$ for all
$n$ which would mean that there is no result for $x+y$.

Hence, the language is extended by introducing a $\Forall$ constraint of arity
4:
\begin{equation}\label{eq:ruq4}
\Forall(x \in A,[e_1,\dots,e_n],\phi(x,e_1,\dots,e_n),\psi(x,e_1,\dots,e_n))
\end{equation}
where $e_1,\dots,e_n$ are variables implicitly existentially quantified inside
the RUQ and $\psi$ is a conjunction of so-called \emph{functional predicates}.
A predicate $p$ of artity $n+1$ ($0 < n$) is a functional predicate iff for
each $x_1,\dots,x_n$ there exists exactly one $y$ such that
$p(x_1,\dots,x_n,y)$ holds; $y$ is called the \emph{result} of $p$. For
instance, $sum$ is a functional predicate. In an extended RUQ, $e_1,\dots,e_n$
must be the results of the functional predicates in $\psi$.

The semantics of \eqref{eq:ruq4} is:
\begin{equation*}
\forall x(x \in A \implies (\exists e_1,\dots,e_n (\psi(x,e_1,\dots,e_n) \land \phi(x,e_1,\dots,e_n))))
\end{equation*}
whereas its negation is:
\begin{equation*}
\exists x(x \in A \land (\exists e_1,\dots,e_n (\psi(x,e_1,\dots,e_n) \land \lnot\phi(x,e_1,\dots,e_n))))
\end{equation*}
given the functional character of $\psi$ \cite[Section
6.2]{DBLP:journals/jar/CristiaR21a}. By means of functional predicates the
introduction of existential variables inside RUQ is harmless while the
expressiveness of the language is widened.

\begin{example}
Formula \eqref{eq:sum} should be written by means of an extended RUQ:
\begin{equation*}
\Forall((x,y) \in R,[n],z < n,sum(x,y,n))
\end{equation*}
Note that $n$ is the result of $sum$. The negation of the above formula is:
\begin{equation*}
\Exists((x,y) \in R,[n],z \geq n,sum(x,y,n))
\end{equation*}
which is consistent with the intended meaning of the original formula.
\qed
\end{example}

\section{\label{practice}$\LRQ$ in Practice}

$\LRQ$ and $\SATRQ$ have been implemented as part of the \setlog (`setlog')
tool \cite{setlog}. \setlog is a constraint logic programming (CLP) language
implemented in Prolog. It also works as a satisfiability solver (and thus as an
automated theorem prover) for a few theories rooted in the theory of finite
sets. \setlog and the theories underlying it have been thoroughly described
elsewhere
\cite{Dovier00,DBLP:journals/tplp/CristiaRF15,DBLP:journals/jar/CristiaR20,DBLP:journals/jar/CristiaR21a,cristia_rossi_2021,zbMATH07552282,DBLP:journals/corr/abs-2105-03005}.
Empirical evidence of the practical capabilities of \setlog has been provided
as well
\cite{CristiaRossiSEFM13,DBLP:journals/jar/CristiaR21,DBLP:journals/jar/CristiaR21b,10.1093/comjnl/bxab030,DBLP:journals/corr/abs-2112-15147}.

Theory $\Ur$ in \setlog is the combination between the theories known as
$\mathcal{LIA}$ and $\mathcal{BR}$. $\mathcal{LIA}$ stands for \emph{linear
integer arithmetic} and implements a decision procedure for systems of linear
equations and disequations over the integer numbers. $\mathcal{BR}$ stands for
\emph{binary relations} and implements a decision procedure for an expressive
fragment of finite set relation algebra (RA)
\cite{DBLP:journals/jar/CristiaR20}. In $\mathcal{BR}$, binary relations are
sets of ordered pairs and all the RA operators are available as constraints,
namely: union ($C = A \cup B \fun$ \verb+un(A,B,C)+), intersection ($C = A \cap
B \fun$ \verb+inters(A,B,C)+), identity relation over a set ($\id(A) = R
\fun$\verb+id(A,R)+), converse of a binary relation ($R^\smile = S \fun$
\verb+inv(R,S)+) and composition ($T = R \circ S \fun$ \verb+comp(R,S,T)+).
These operators can be combined in $\mathcal{BR}$-formulas to define many other
operators such as: domain ($\dom R = A \fun$ \verb+dom(R,A)+) and range ($\ran
R = A \fun$ \verb+ran(R,A)+) of a binary relation, a predicate constraining a
binary relation to be a function (\verb+pfun(R)+), function application ($F(X)
= Y \fun$ \verb+applyTo(F,X,Y)+), etc.

\newcommand{\LRQBR}{\LRQ(\mathcal{LIA}+\mathcal{BR})}

$\mathcal{LIA}$ satisfies all the restrictions discussed in Remark
\ref{r:restrictions}, but $\mathcal{BR}$ does not. However, the solutions
returned by \setlog when solving $\mathcal{BR}$-formulas are compatible with
the irreducible form of Definition \ref{def:solved} \cite[Definition 15 and
Theorem 3]{DBLP:journals/jar/CristiaR20}. That is, irreducible formulas in
$\mathcal{BR}$ are satisfied by substituting all the set and relational
variables by the empty set. However, when a $\mathcal{BR}$-formula is
processed, RQ domains may be affected. As discussed in Remark
\ref{r:restrictions} this may compromise termination; $\LRQBR$ decidability is
discussed more deeply in Section \ref{relwork}. Nevertheless, as the following
case studies show, \setlog is still an effective and efficient tool to
automatically reason about $\LRQBR$ formulas. Termination do not seem to be an
issue for many classes of practical problems expressible in $\LRQBR$.

\setlog's concrete syntax is a slight variation of the syntax used in this
paper: $\{\_\plus\_\}$ is \verb+{_/_}+; $(\_,\_)$ is \verb+[_,_]+; $\in$ is
\verb+in+; $\land$ is \verb+&+; $A \subseteq B$ is \verb+subset(A,B)+;
variables begin with a capital letter.

The following simple example shows \setlog syntax and how to use it to prove
invariance lemmas.

\begin{example}
Let \verb+Usr+ and \verb+Admin+ be the sets of users and administrators of some
system. Let us say that the security policy requires these sets to be disjoint.
We can express that in \setlog as follows\footnote{Other encodings are
possible; we deliberately choose to use a RUQ.}:
\begin{verbatim}
inv(Usr,Adm) :- foreach([U in Usr,A in Adm], U neq A).
\end{verbatim}
We can model the operation adding user \verb+X+ to \verb+Usr+  yielding \verb+Usr_+ as the new set:
\begin{verbatim}
addUsr(Usr,Adm,X,Usr_,Adm_) :- Usr_ = {X / Usr} & Adm_ = Adm.
\end{verbatim}
We would like to know if \verb+addUsr+ preserves \verb+inv+, so we run the
following query\footnote{Given that \setlog is a satisfiability solver we call
it on the negation of the lemma waiting for a \texttt{no} (i.e., $\false$)
answer.}:
\begin{verbatim}
neg(inv(Usr,Adm) & addUsr(Usr,Adm,X,Usr_,Adm_) implies inv(Usr_,Adm_)).
\end{verbatim}
As \verb+addUsr+ fails to preserve \verb+inv+, \setlog provides a
counterexample (\verb+N+ new variable):
\begin{verbatim}
Admin = {X / N}, Usr_ = {X / Usr}, Admin_ = {X / N}
\end{verbatim}
So we can fix \verb+addUsr+ by adding a pre-condition:
\begin{verbatim}
addUsr(Usr,Adm,X,Usr_,Adm_) :- X nin Usr & Usr_ = {X / Usr} & Adm_ = Adm.
\end{verbatim}
Now the answer to the query is \verb+no+ meaning that the formula is unsatisfiable.
\qed
\end{example}

The next three subsections present real-world case studies where \setlog
is used as a CLP language and as an automated verifier. The focus in on how RQ
are used.

\subsection{The Landing Gear System}

In the fourth edition of the ABZ Conference held in Toulouse (France) in 2014,
Boniol and Wiels proposed a real-life, industrial-strength case study, known as
the Landing Gear System (LGS) \cite{DBLP:conf/asm/BoniolW14}. Mammar and Laleau
\cite{DBLP:conf/asm/MammarL14} developed an Event-B
\cite{Abrial:2010:MES:1855020} specification of the LGS, formally verified
using Rodin \cite{DBLP:journals/sttt/AbrialBHHMV10}, ProB \cite{Leuschel00} and
AnimB\footnote{\url{http://www.animb.org}}. Basically, we encoded in \setlog
the Event-B specification and used \setlog to automatically discharge all the
proof obligations generated by  Rodin. This work is thoroughly described
elsewhere \cite{DBLP:journals/corr/abs-2112-15147}\footnote{\setlog code of the
LGS: \url{http://www.clpset.unipr.it/SETLOG/APPLICATIONS/lgs.zip}.}.

This is the simplest model in terms of RQ as it does not require nested RQ. A
typical use of RQ in the LGS is the following\footnote{Some variable names are
changed to save some space.}:
\begin{verbatim}
ta_inv5(Positions,DULDC) :-
  pfun(DULDC) &
  dompf(DULDC,Positions) &
  foreach([X,Y] in DULDC, 0 =< Y).
\end{verbatim}
That is, \verb+ta_inv5+ defines a state invariant corresponding to the Event-B
machine named TimedAspects. In mathematical notation the invariant states
$DULDC \in Positions \pfun \nat$. As in \setlog we cannot express $\nat$ we use
a RUQ to ascertain that the second component of each element in $DULDC$ is
non-negative. Then, invariance lemmas such as:
\begin{verbatim}
neg(di_inv1(Positions,Dcp) &
    ta_inv1(CT) &
    ta_inv5(Positions,DULDC) &
    ta_make_DoorClosed(...,Dcp,...,CT,...,DULDC,...,Dcp_,...,DULDC_)
    implies   ta_inv5(Positions,DULDC_)).
\end{verbatim}
are automatically discharged by \setlog. \verb+di_inv1(Positions,Dcp)+ and
\verb+ta_inv1(CT)+ are other invariants that are needed as hypothesis and
\verb+ta_make_DoorClosed+ is one of the state transitions of the LGS
model (ellipses stand for variables). Hence, \verb+ta_make_DoorClosed+ changes
the value of \verb+DULDC+ during the state transition and so we need to check
that \verb+ta_inv5+ is still valid in the new state.

\setlog discharges all the 465 proof obligations in less than 5 minutes.

\subsection{The Bell-LaPadula Security Model}

Around 1973 D.E. Bell and L. LaPadula published the first formal model of a
secure operating system \cite{BLP1,BLP2}. Today this model is known as the
Bell-LaPadula model, abbreviated as BLP. BLP is described as a state machine by
means of first-order logic and set theory. The model also formalizes two state
invariants known as \emph{security condition} and \emph{*-property}. We encoded
BLP and its properties in \setlog and used it to automatically discharge all
the invariance lemmas. This work is presented with detail elsewhere
\cite{DBLP:journals/jar/CristiaR21}\footnote{\setlog code of BLP:
\url{http://www.clpset.unipr.it/SETLOG/APPLICATIONS/blp2.zip}.}.

The following is the \setlog encoding of the *-property:
\begin{verbatim}
starProp(State) :-
  State = [[br,Br],[bw,Bw],[fo,Fo],[fs,Fs],[m,M]] &
  foreach([[S1,O1] in Br, [S2,O2] in Bw], [Sco1,Sco2],
    S1 = S2 implies dominates(Sco1,Sco2),
    applyTo(Fo,O1,Sco1) & applyTo(Fo,O2,Sco2)).
\end{verbatim}
As can be seen it requires the use of the extended version of nested RUQ
(Section \ref{extensions}). That is, it declares two existential variables
inside the RUQ (\verb+Sco1+ and \verb+Sco2+) and uses the functional predicate
section (\verb+applyTo(Fo,O1,Sco1) & ...+). Extended RUQ are also used in the
state transitions, for instance:
\begin{verbatim}
getRead(State,S,O,State_) :-
  State = [[br,Br],[bw,Bw],[fo,Fo],[fs,Fs],[m,M]] &
  [O,[S,read]] in M & [S,O] nin Br &
  applyTo(Fo,O,Sco) & applyTo(Fs,S,Scs) &
  dominates(Sco,Scs) &
  foreach([Si,Oi] in Bw,[Scoi],
    Si = S implies dominates(Sco,Scoi),applyTo(Fo,Oi,Scoi)) &
  Br_ = {[S,O]/Br} &
  State_ = [[br,Br_],[bw,Bw],[fo,Fo],[fs,Fs],[m,M]].
\end{verbatim}
\verb+getRead+ grants \verb+read+ permission to subject \verb+S+ on object
\verb+O+ in which case changes the value of variable \verb+Br+. Then, the
following invariance lemma must be proved:
\begin{verbatim}
neg(starProp(State) & getRead(State,S,O,State_) implies starProp(State_)).
\end{verbatim}

Due to an optimization introduced since our first experiments with BLP, now
\setlog proves all the 60 invariance lemmas in less than 2 seconds instead of
the 11.5 seconds reported previously \cite{DBLP:journals/jar/CristiaR21}.

\subsection{\label{android}Android's Permission System}

In a series of articles a group of Uruguayan and Argentinian researchers and
students developed a certified Coq model and implementation of Android's
permission system
\cite{DBLP:conf/ictac/BetarteCLR15,DBLP:journals/cuza/BetarteCLR16,DBLP:journals/cleiej/LunaBCSCG18,DBLP:conf/types/Luca020}.
They model the system as a state machine, then propose a number of properties
and use Coq to verify them against the model. Properties are classified in two
classes: valid state properties and security properties. The first class
ensures the state machine preserves some well-formedness properties of the
state variables, while the second ensures Android behaves as expected in some
security-related scenarios. As with the previous case studies, we translated
the Coq model into \setlog and used it to automatically prove
properties\footnote{\setlog code of Android 10's permission system:
\url{http://www.clpset.unipr.it/SETLOG/APPLICATIONS/android.zip}}.

This is the most challenging case study we have developed so far. It takes
\setlog to its limits concerning reasonable computing times to discharge proof
obligations\footnote{Actually, \setlog is unable to prove only three of the
properties proposed in the Coq model.}. It also uses the most complex nested RQ
we have used so far, as the following one which formalizes one of the valid
state properties\footnote{In the following formulas some simplifications are
introduced to save some space.}.
\begin{verbatim}
notDupPerm(DP) :-
  foreach([[A1,SP1] in DP, [A2,SP2] in DP],
    foreach([P1 in SP1, P2 in SP2],[IP1,IP2],
      IP1 = IP2 implies P1 = P2 & A1 = A2,
      idP(P1,IP1) & idP(P2,IP2))).
\end{verbatim}
Note that there is one nested RUQ whose filter is an extended nested RUQ.
Furthermore, in the innermost RUQ \verb+SP1+ and \verb+SP2+ are domains whereas
they are part of the outermost control term. That is, first \verb+[A1,SP1]+
quantifies over \verb+DP+ and then \verb+P1+ quantifies over \verb+SP1+.

The following is another state consistency property fitting in the $\Phi_{\fe}$ subclass.
\begin{verbatim}
permsDom(PR,Apps,SS) :-
  foreach([A,P] in PR,
    A in Apps or exists(SI in SS, [IA], IA = A, idSI(SI,IA))).
\end{verbatim}
As can be seen, the domain of the RUQ does not appear inside the REQ, thus
making the formula free of $\fe$ loops. The REQ uses a functional predicate.

\setlog automatically discharges 801 proof obligations in around 22 minutes.

\section{\label{relwork}Discussion and Related Work}

\newcommand{\CL}{\forall_{0,2}^\pi}

The problem of deciding the satisfiability of quantified formulas is obviously
undecidable. Hence, we can only hope to find expressive fragments that are
decidable. A possible path for this is to restrict the form of the quantifiers
and another is to allow only certain quantifier-free formulas. A class of
quantified formulas that has been studied for many years is that of restricted
quantifiers. However, the full fragment of restricted quantifiers as well as
some of its sub-fragments are undecidable (e.g.
\cite{https://doi.org/10.1002/cpa.3160340203,https://doi.org/10.1002/cpa.3160460104,DBLP:journals/lmcs/FeldmanPISS19}).
Hence, further restrictions must be imposed. One of such restrictions is to
deal with formulas where RUQ are after REQ. This fragment has been proved to be
decidable in different contexts
\cite{https://doi.org/10.1002/cpa.3160340203,DBLP:journals/jsyml/OmodeoP12,DBLP:journals/lmcs/FeldmanPISS19}.
We started by working with quantifier-free formulas that do not affect
quantification domains (Definition \ref{signature}, Remark
\ref{r:restrictions}). We have not seen other works taking this path. Then, we
relaxed that restriction loosing termination but nonetheless gaining
expressiveness. This combination seems to be useful in practice (Section
\ref{practice}). 

We identify Computable Set Theory (CST)
\cite{10.5555/92143,DBLP:series/mcs/CantoneOP01} as the main and closest source
of works related to the one presented in this paper. However, there are works
outside CST dealing with similar problems, specially in the realm of STM
solvers. We start with the latter.

The STM solving community deals with unrestricted quantifiers.  The usual
practical technique employed in SMT solvers to deal with quantified formulas is
heuristics-based quantifier instantiation
\cite{Flanagan2003,Dutertre01,Moura2007,Ge2009}. In particular, Simplify's
E-matching algorithm \cite{Detlefs2005} is used by some of these tools.
Heuristic instantiation manages to solve problems of software verification.
However, it suffers from some shortcomings as stated by Ge and de Moura
\cite{DBLP:conf/cav/GeM09}. For this reason, Ge and de Moura propose some
decidable fragments of first order logic modulo theories. The proposed decision
procedures can solve complex quantified array properties. The authors show how
to construct models for satisfiable quantified formulas in these fragments.

In a more recent work, Feldman et al. \cite{DBLP:journals/lmcs/FeldmanPISS19}
study the problem of discharging inductive invariants with quantifier
alternation using SMT solvers. They depart from formulas belonging to the
Effectively Propositional logic (EPR), also known as the
Bernays-Schönfinkel-Ramsey class. In this logic, formulas are of the form
$\exists^{*}\hspace{-2pt}\forall^*(\delta)$, where $\delta$ is a
quantifier-free formula over some first-order vocabulary. This logic has been
proved to be decidable and useful in automatically discharging verification
conditions of software involving linked-lists, distributed protocols, etc.
Feldman and his colleagues then go to extend EPR with formulas of the form
$\forall^{*}\hspace{-2pt}\exists^*(\delta)$. The first conclusion they get is
that this fragment is undecidable. However, a second conclusion is that some
techniques can be put to work as to solve many interesting problems in that
fragment. The main technique is instantiations that are bounded in the depth of
terms. However, bounded instantiations guarantee termination a-priori even when
the invariant is not correct. In these cases the algorithm returns an
approximated counterexample. The invariants approached by Feldman at al. are of
the same form of most of the proof obligations present in our case studies.

As can bee seen, the SMT solving community approaches the problem of  finding
decision procedures for quantified fragments of logic languages in a quite
different way as we do. They do not use RQ nor a theory of sets. RQ have an
interesting property: $\forall x \in A: \phi$, with $A$ a variable, is
satisfied with $A = \emptyset$. If this is combined with a set constructor such
as $\{\cdot \plus \cdot\}$, it is possible to iterate over the elements of the
quantification domain until the `end' is reached: if it is $\emptyset$, then
the quantifier can be eliminated; it if is a variable, then the iteration can
be stopped because we know that we have a good candidate solution for the
quantification domain. Quantification domains are crucial to find out a
decision procedure for formulas where REQ are after RUQ. As we have shown in
Theorem \ref{t:termFUE}, REQ with the same domain variable than a preceding
RUQ, in general, generate infinite feedback loops. These loops can be easily
detected by following the flow of hypothesized elements through quantification
domains. We believe all this is harder to see when the language admits general
quantifiers and is not based on a theory of sets. At the same time, RQ do not
pose a threat on expressiveness when it comes to software verification.
Finally, concerning counterexample generation, within the decidable fragments
presented in this paper, \setlog is not only always able to generate a
counterexample of any given satisfiable formula but it (interactively)
generates a finite representation of all its solutions (Theorem
\ref{equisatisfiable}).

Our work is  closer to CST. CST has been looking for decidability results on
quantified fragments of set theory since at least forty years ago. In many
cases, CST is interested in proving decidability results (in the form of
satisfiability tests) but not so much in providing efficient algorithms or in
implementing them in some software tool. Brevan et al
\cite{https://doi.org/10.1002/cpa.3160340203} present a semi-decision algorithm
for a wide class of quantified formulas where the quantifier-free theory is
decidable. In this work, quantifiers are RQ but no quantified variable can be a
quantified domain of a deeper RQ (see ($\dagger$) below). For some theories the
algorithm becomes complete. In general, the quantifier-free theories are
sub-languages of set theory. In particular they consider a language based on
$\{=,\in\}$. The resulting quantified language allows to express many
set-theoretic operators (e.g. union). From that article, several
researchers of the CST community have found a number of (un)decidability
results about different fragments of quantified languages of set theory
\cite{DBLP:journals/mlq/ParlamentoP92,https://doi.org/10.1002/cpa.3160460104,DBLP:journals/mlq/OmodeoPP96,DBLP:journals/jsyml/BelleP06,DBLP:journals/jsyml/OmodeoP12}.

As can be seen, the decidability results represented by Theorems
\ref{equisatisfiable}-\ref{terminationFEU} have already been proved. On the
contrary, we believe the result of Theorem \ref{t:termFUE} is new. Besides, as
far as we understand, all of our results are new in terms of the algorithm we
use and in particular the set of rewrite rules we use, not to mention the fact
that we put these results to work in a software tool that is able to solve
real-world problems (Section \ref{practice}).

More recently, Cantone and Longo
\cite{DBLP:journals/tcs/CantoneL14,Cantone2011}  worked on the language $\CL$,
part of Cantone's long work on CST. $\CL$ helps to analyze the decidability and
expressiveness of $\LRQBR$\footnote{The same could be achieved by using as a
reference the work on CST by Breban et al.
\cite{https://doi.org/10.1002/cpa.3160340203}. We opted by Cantone's  because
is newer. Nonetheless, Breban's is also duly referenced.}. $\CL$ is a
two-sorted quantified fragment of set theory allowing the following literals:
$x \in y$, $(x,y) \in R$, $x = y$ and $R = S$, where $x$ and $y$ are set
variables and $R$ and $S$ are variables ranging over binary relations. Note
that in $\CL$ sets are \emph{pure} meaning that their elements are sets where
the empty set is the base element (semantics of $\CL$ is given in terms of the
von Neumann standard cumulative hierarchy of sets). Formulas in $\CL$ are
Boolean combinations of expressions of the following two forms: $\forall x_1
\in z_1: \dots \forall x_h \in z_h: \forall (x_{h+1},y_{h+1}) \in R_{h+1}:
\dots \forall (x_n,y_n) \in R_n : \delta$, and the same expression where
$\forall$ is replaced by $\exists$. In these expressions: $\delta$ is a
propositional combination of $\CL$-literals; $x_i, y_i, z_i$ are set variables;
$R_i$ are binary relation variables; and ($\dagger$) no $x_i$ or $y_i$ can also
occur as a $z_j$ (i.e., no quantified variable can occur also as a domain
variable in the same quantifier prefix). Note that in $\CL$ RUQ and REQ cannot
be mixed in the same expression.

$\CL$ is a decidable language which allows to express all the operators  of RA
with the exception of composition. Indeed, $\CL$ only allows to express $R
\circ S \subseteq T$ but the other inclusion cannot be written. The
impossibility to express the other inclusion comes from the fact that RUQ and
REQ cannot be mixed in the same expression, which is tantamount to preserve
decidability of $\CL$. In effect, $T \subseteq R \circ S$ is equivalent to:
\begin{equation}\label{eq:comp}
\forall (x,z) \in T: (\exists (x_1,y_1) \in R: (\exists (y_2,z_1) \in S: x_1 = x \land y_1 = y_2 \land z_1 = z))
\end{equation}
which is not a $\CL$ formula (as it mixes RUQ and REQ).

Now we analyze the decidability and expressiveness of $\LRQBR$ in terms of $\CL$:
\begin{enumerate}
\item $\CL$ sets are not necessarily finite; $\LRQBR$ sets are finite.
However, since we are interested in software verification this is not a real
restriction.
\item $\CL$ is not a parametric language as $\LRQ$, although other works
on CST provide parametric languages in the line of $\CL$
\cite{https://doi.org/10.1002/cpa.3160340203}. Parametrization of $\LRQ$
enables hybrid sets.
\item $\CL$ sets are pure, while $\LRQBR$ sets are hybrid. Pure sets
allow to encode ordered pairs, natural numbers, etc. However, these encodings
tend to reduce the efficiency of solvers. Working directly with hybrid sets
facilitates the integration with efficient solvers for other theories, such as
$\mathcal{LIA}$. For instance, the formula in Example \ref{ex:min} encodes the
minimum of a set which would require a complex $\CL$ formula.
\item $\LRQBR$ extends the decidability result of $\CL$. On one hand, $\CL$
almost expresses RA so if in $\LRQBR$ composition is used as in $\CL$, the
former is a fragment of the latter in what concerns to RA. On the other hand,
$\LRQBR$ allows to fully express composition by a suitable encoding of formula
\eqref{eq:comp}:
\begin{equation*}
\Forall((x,z) \in T, \Exists([(x_1,y_1) \in R, (x_2,z_1) \in S], x_1 = x \land y_1 = y_2 \land z_1 = z))
\end{equation*}
As can be seen, this formula is free of $\fe$ loops as long as the variable of
$T$ is different from the variables of $R$ and $S$. Hence, Cantone and Longo go
to far in restricting $\CL$ as the real problem with composition comes with
formulas such as $R \subseteq R \circ S$ or $S \subseteq R \circ S$. This is
aligned with our results concerning the decidability of
$\mathcal{L}_\mathcal{BR}$ \cite[Section 5.3, Definition
16]{DBLP:journals/jar/CristiaR20}. Finding larger decidable fragments of RA is
important as it is a fundamental theory in Computer Science due to its
expressiveness \cite[last paragraph Section 1]{DBLP:journals/jar/Givant06}.
Besides, as shown in Section \ref{android} with formula \verb+permsDom+,
allowing $\Phi_{\fe}$ formulas is useful in practice.
\item $\LRQBR$ allows quantified variables to occur as domains in the
same quantifier prefix (cf. ($\dagger$) above). Recall, for instance, formula
\verb+notDupPerm+ in Section \ref{android}. These formulas are ruled out from
$\CL$ because they compromise completeness, not soundness \cite[Sections 2 and
4]{https://doi.org/10.1002/cpa.3160340203}. The problem is that there are
formulas not adhering to ($\dagger$) that are satisfied only by infinite sets
when $\in$ is part of the quantifier-free theory \cite{parlamentoPolicriti}. So
if we allow these formulas in \setlog its answers are correct because the tool
is still sound, although it will not terminate for those formulas that are
satisfied only by infinite sets. As shown in Section \ref{android}, allowing
formulas not adhering to ($\dagger$) is useful in practical cases.
\item $\CL$ is proved to be decidable by encoding each of its formulas as
a $\forall_0^\pi$-formula. In turn, $\forall_0^\pi$ is shown to be decidable by
means of the notions of skeletal representation and of its realization
\cite{Cantone2011}. The authors do not provide an algorithm with an obvious
operative semantics for the decidability problem of $\CL$ formulas. Conversely,
by adapting our results on RIS, we provide a simple and concrete solver for
$\LRQ$ (i.e., $\SATRQ$) with CLP properties easily implementable as part of
\setlog. In turn, we provide empirical evidence of its practical capabilities.
The algorithms presented by Breban
\cite{https://doi.org/10.1002/cpa.3160340203} are closer to $\SATRQ$.
\end{enumerate}









\section{\label{concl}Final Remarks}
We have presented a decision procedure for quantifier-free, decidable languages
extended with restricted quantifiers. The decision procedure is based on a
small collection of rewrite rules for primitive set-theoretic operators
($\subseteq, \in, =$). Although all but one of the decidability results
underlying our decision procedure are not new, as far as we understand, the
decision procedure and its rewrite rules are novel. The new decidability result
concerns quantified formulas where a restricted existential quantifier comes
after a restricted universal quantifier. The result is based on building a
graph linking the universally quantified domains with those existentially
quantified. Then, a path analysis is performed to find out whether or not there
is a path from a universally quantified domain to the same domain but making
part of an existential quantifier. Finally, the implementation of the decision
procedure as part of a software tool (\setlog) and its successful application
to real-world, industrial-strength case studies as an automated software
verifier provide empirical evidence of the usefulness of the approach.

Our strongest decidability results are possible by imposing some restrictions
on the quantifier-free language---namely,  that it does not include terms
denoting sets. Although non-trivial languages fulfill these restrictions (e.g.,
linear integer arithmetic), the greatest expressiveness is reached when some of
these restrictions are lifted at the expense of termination. Hence, as a future
work we plan to study what quantifier-free languages preserve termination even
though they support sets to some extent. In particular, the long and
fruitful work on CST should help us in finding those languages.

\bibliographystyle{splncs03}
\bibliography{/home/mcristia/escritos/biblio}

\begin{thebibliography}{10}
\providecommand{\url}[1]{\texttt{#1}}
\providecommand{\urlprefix}{URL }

\bibitem{Abrial:2010:MES:1855020}
Abrial, J.R.: Modeling in Event-B: System and Software Engineering. Cambridge
  University Press, New York, NY, USA, 1st edn. (2010)

\bibitem{DBLP:journals/sttt/AbrialBHHMV10}
Abrial, J., Butler, M.J., Hallerstede, S., Hoang, T.S., Mehta, F., Voisin, L.:
  Rodin: an open toolset for modelling and reasoning in {E}vent-{B}. Int. J.
  Softw. Tools Technol. Transf.  12(6),  447--466 (2010),
  \url{https://doi.org/10.1007/s10009-010-0145-y}

\bibitem{BLP1}
Bell, D.E., LaPadula, L.: Secure computer systems: Mathematical foundations.
  MTR 2547, The MITRE Corporation (May 1973)

\bibitem{BLP2}
Bell, D.E., LaPadula, L.: Secure computer systems: Mathematical model. ESD-TR
  73-278, The MITRE Corporation (Nov 1973)

\bibitem{DBLP:journals/jsyml/BelleP06}
Bell{\`{e}}, D., Parlamento, F.: Truth in {V} for
  {\(\exists\)}\({}^{\mbox{*}}\){\(\forall\)}{\(\forall\)}-sentences is
  decidable. J. Symb. Log.  71(4),  1200--1222 (2006),
  \url{https://doi.org/10.2178/jsl/1164060452}

\bibitem{DBLP:conf/RelMiCS/BerghammerHS14}
Berghammer, R., H{\"{o}}fner, P., Stucke, I.: Automated verification of
  relational while-programs. In: H{\"{o}}fner, P., Jipsen, P., Kahl, W.,
  M{\"{u}}ller, M.E. (eds.) Relational and Algebraic Methods in Computer
  Science - 14th International Conference, RAMiCS 2014, Marienstatt, Germany,
  April 28-May 1, 2014. Proceedings. Lecture Notes in Computer Science, vol.
  8428, pp. 173--190. Springer (2014),
  \url{http://dx.doi.org/10.1007/978-3-319-06251-8\_11}

\bibitem{DBLP:journals/cuza/BetarteCLR16}
Betarte, G., Campo, J.D., Luna, C., Romano, A.: Formal analysis of {A}ndroid's
  permission-based security model,. Sci. Ann. Comp. Sci.  26(1),  27--68
  (2016), \url{https://doi.org/10.7561/SACS.2016.1.27}

\bibitem{DBLP:conf/ictac/BetarteCLR15}
Betarte, G., Campo, J.D., Luna, C.D., Romano, A.: Verifying {A}ndroid's
  permission model. In: Leucker, M., Rueda, C., Valencia, F.D. (eds.)
  Theoretical Aspects of Computing - {ICTAC} 2015 - 12th International
  Colloquium Cali, Colombia, October 29-31, 2015, Proceedings. Lecture Notes in
  Computer Science, vol. 9399, pp. 485--504. Springer (2015),
  \url{https://doi.org/10.1007/978-3-319-25150-9\_28}

\bibitem{DBLP:conf/asm/BoniolW14}
Boniol, F., Wiels, V.: The landing gear system case study. In: Boniol, F.,
  Wiels, V., Ameur, Y.A., Schewe, K. (eds.) {ABZ} 2014: The Landing Gear Case
  Study - Case Study Track, Held at the 4th International Conference on
  Abstract State Machines, Alloy, B, TLA, VDM, and Z, Toulouse, France, June
  2-6, 2014. Proceedings. Communications in Computer and Information Science,
  vol. 433, pp. 1--18. Springer (2014),
  \url{https://doi.org/10.1007/978-3-319-07512-9\_1}

\bibitem{https://doi.org/10.1002/cpa.3160340203}
Breban, M., Ferro, A., Omodeo, E.G., Schwartz, J.T.: Decision procedures for
  elementary sublanguages of set theory. {II}. {F}ormulas involving restricted
  quantifiers, together with ordinal, integer, map, and domain notions.
  Communications on Pure and Applied Mathematics  34(2),  177--195 (1981),
  \url{https://onlinelibrary.wiley.com/doi/abs/10.1002/cpa.3160340203}

\bibitem{10.5555/92143}
Cantone, D., Ferro, A., Omodeo, E.: Computable Set Theory. Clarendon Press, USA
  (1989)

\bibitem{DBLP:journals/tcs/CantoneL14}
Cantone, D., Longo, C.: A decidable two-sorted quantified fragment of set
  theory with ordered pairs and some undecidable extensions. Theor. Comput.
  Sci.  560,  307--325 (2014),
  \url{http://dx.doi.org/10.1016/j.tcs.2014.03.021}

\bibitem{Cantone2011}
Cantone, D., Longo, C., Asmundo, M.N.: A decidable quantified fragment of set
  theory involving ordered pairs with applications to description logics. In:
  Bezem, M. (ed.) Computer Science Logic, 25th International Workshop / 20th
  Annual Conference of the EACSL, {CSL} 2011, September 12-15, 2011, Bergen,
  Norway, Proceedings. LIPIcs, vol.~12, pp. 129--143. Schloss Dagstuhl -
  Leibniz-Zentrum f{\"{u}}r Informatik

\bibitem{DBLP:series/mcs/CantoneOP01}
Cantone, D., Omodeo, E.G., Policriti, A.: Set Theory for Computing - From
  Decision Procedures to Declarative Programming with Sets. Monographs in
  Computer Science, Springer (2001),
  \url{http://dx.doi.org/10.1007/978-1-4757-3452-2}

\bibitem{10.1093/comjnl/bxab030}
Cristi\'a, M., Katz, R.D., Rossi, G.: {Proof Automation in the Theory of Finite
  Sets and Finite Set Relation Algebra}. The Computer Journal  (05 2021),
  \url{https://doi.org/10.1093/comjnl/bxab030}, bxab030

\bibitem{DBLP:journals/jar/CristiaR20}
Cristi{\'{a}}, M., Rossi, G.: Solving quantifier-free first-order constraints
  over finite sets and binary relations. J. Autom. Reason.  64(2),  295--330
  (2020), \url{https://doi.org/10.1007/s10817-019-09520-4}

\bibitem{DBLP:journals/jar/CristiaR21}
Cristi{\'{a}}, M., Rossi, G.: Automated proof of {B}ell-{L}a{P}adula security
  properties. J. Autom. Reason.  65(4),  463--478 (2021),
  \url{https://doi.org/10.1007/s10817-020-09577-6}

\bibitem{DBLP:journals/jar/CristiaR21a}
Cristi{\'{a}}, M., Rossi, G.: Automated reasoning with restricted intensional
  sets. J. Autom. Reason.  65(6),  809--890 (2021),
  \url{https://doi.org/10.1007/s10817-021-09589-w}

\bibitem{DBLP:journals/corr/abs-2112-15147}
Cristi{\'{a}}, M., Rossi, G.: An automatically verified prototype of a landing
  gear system. CoRR  abs/2112.15147 (2021),
  \url{https://arxiv.org/abs/2112.15147}

\bibitem{DBLP:journals/jar/CristiaR21b}
Cristi{\'{a}}, M., Rossi, G.: An automatically verified prototype of the
  {T}okeneer {ID} station specification. J. Autom. Reason.  65(8),  1125--1151
  (2021), \url{https://doi.org/10.1007/s10817-021-09602-2}

\bibitem{DBLP:journals/corr/abs-2105-03005}
Cristi{\'{a}}, M., Rossi, G.: A decision procedure for a theory of finite sets
  with finite integer intervals. CoRR  abs/2105.03005 (2021),
  \url{https://arxiv.org/abs/2105.03005}, under consideration in Theoretical
  Computer Science

\bibitem{cristia_rossi_2021}
Cristi\'a, M., Rossi, G.: Integrating cardinality constraints into constraint
  logic programming with sets. Theory and Practice of Logic Programming pp.
  1--33 (2021), \url{https://doi.org/10.1017/S1471068421000521}

\bibitem{zbMATH07552282}
Cristi{\'a}, M., Rossi, G.: {{\(\{\mathit{log}\}\)}}: set formulas as programs.
  Rend. Ist. Mat. Univ. Trieste  53, ~24 (2021), id/No 23

\bibitem{CristiaRossiSEFM13}
Cristi{\'a}, M., Rossi, G., Frydman, C.S.: $\{log\}$ as a test case generator
  for the {T}est {T}emplate {F}ramework. In: Hierons, R.M., Merayo, M.G.,
  Bravetti, M. (eds.) SEFM. Lecture Notes in Computer Science, vol. 8137, pp.
  229--243. Springer (2013)

\bibitem{DBLP:journals/tplp/CristiaRF15}
Cristi{\'{a}}, M., Rossi, G., Frydman, C.S.: Adding partial functions to
  constraint logic programming with sets. Theory Pract. Log. Program.  15(4-5),
   651--665 (2015), \url{https://doi.org/10.1017/S1471068415000290}

\bibitem{DBLP:conf/types/Luca020}
{De Luca}, G., Luna, C.: Towards a certified reference monitor of the {A}ndroid
  10 permission system. In: de'Liguoro, U., Berardi, S., Altenkirch, T. (eds.)
  26th International Conference on Types for Proofs and Programs, {TYPES} 2020,
  March 2-5, 2020, University of Turin, Italy. LIPIcs, vol. 188, pp. 3:1--3:18.
  Schloss Dagstuhl - Leibniz-Zentrum f{\"{u}}r Informatik (2020),
  \url{https://doi.org/10.4230/LIPIcs.TYPES.2020.3}

\bibitem{Detlefs2005}
Detlefs, D., Nelson, G., Saxe, J.B.: Simplify: a theorem prover for program
  checking  52(3),  365--473

\bibitem{Dovier00}
Dovier, A., Piazza, C., Pontelli, E., Rossi, G.: Sets and constraint logic
  programming. ACM Trans. Program. Lang. Syst.  22(5),  861--931 (2000)

\bibitem{Dovier2006}
Dovier, A., Pontelli, E., Rossi, G.: Set unification. Theory Pract. Log.
  Program.  6(6),  645--701 (2006)

\bibitem{Dutertre01}
Dutertre, B., de~Moura, L.M.: A fast linear-arithmetic solver for {DPLL(T)}.
  In: Ball, T., Jones, R.B. (eds.) CAV. Lecture Notes in Computer Science, vol.
  4144, pp. 81--94. Springer (2006)

\bibitem{DBLP:journals/lmcs/FeldmanPISS19}
Feldman, Y.M.Y., Padon, O., Immerman, N., Sagiv, M., Shoham, S.: Bounded
  quantifier instantiation for checking inductive invariants. Log. Methods
  Comput. Sci.  15(3) (2019), \url{https://doi.org/10.23638/LMCS-15(3:18)2019}

\bibitem{Flanagan2003}
Flanagan, C., Joshi, R., Ou, X., Saxe, J.B.: Theorem proving using lazy proof
  explication. In: Jr., W.A.H., Somenzi, F. (eds.) Computer Aided Verification,
  15th International Conference, {CAV} 2003, Boulder, CO, USA, July 8-12, 2003,
  Proceedings. Lecture Notes in Computer Science, vol. 2725, pp. 355--367.
  Springer, \url{https://doi.org/10.1007/978-3-540-45069-6_34}

\bibitem{Ge2009}
Ge, Y., Barrett, C.W., Tinelli, C.: Solving quantified verification conditions
  using satisfiability modulo theories  55(1-2),  101--122

\bibitem{DBLP:conf/cav/GeM09}
Ge, Y., de~Moura, L.M.: Complete instantiation for quantified formulas in
  satisfiabiliby modulo theories. In: Bouajjani, A., Maler, O. (eds.) Computer
  Aided Verification, 21st International Conference, {CAV} 2009, Grenoble,
  France, June 26 - July 2, 2009. Proceedings. Lecture Notes in Computer
  Science, vol. 5643, pp. 306--320. Springer (2009),
  \url{http://dx.doi.org/10.1007/978-3-642-02658-4_25}

\bibitem{DBLP:journals/jar/Givant06}
Givant, S.: The calculus of relations as a foundation for mathematics. J.
  Autom. Reasoning  37(4),  277--322 (2006),
  \url{http://dx.doi.org/10.1007/s10817-006-9062-x}

\bibitem{DBLP:books/aw/Lamport2002}
Lamport, L.: Specifying Systems, The {TLA+} Language and Tools for Hardware and
  Software Engineers. Addison-Wesley (2002),
  \url{http://research.microsoft.com/users/lamport/tla/book.html}

\bibitem{Leuschel00}
Leuschel, M., Butler, M.: {ProB}: A model checker for {B}. In: Keijiro, A.,
  Gnesi, S., Mandrioli, D. (eds.) FME. Lecture Notes in Computer Science, vol.
  2805, pp. 855--874. Springer-Verlag (2003)

\bibitem{DBLP:journals/cleiej/LunaBCSCG18}
Luna, C., Betarte, G., Campo, J.D., Sanz, C., Cristi{\'{a}}, M., Gorostiaga,
  F.: A formal approach for the verification of the permission-based security
  model of {A}ndroid. {CLEI} Electron. J.  21(2) (2018),
  \url{https://doi.org/10.19153/cleiej.21.2.3}

\bibitem{DBLP:conf/asm/MammarL14}
Mammar, A., Laleau, R.: Modeling a landing gear system in {E}vent-{B}. In:
  Boniol, F., Wiels, V., Ameur, Y.A., Schewe, K. (eds.) {ABZ} 2014: The Landing
  Gear Case Study - Case Study Track, Held at the 4th International Conference
  on Abstract State Machines, Alloy, B, TLA, VDM, and Z, Toulouse, France, June
  2-6, 2014. Proceedings. Communications in Computer and Information Science,
  vol. 433, pp. 80--94. Springer (2014),
  \url{https://doi.org/10.1007/978-3-319-07512-9\_6}

\bibitem{Moura2007}
de~Moura, L.M., Bj{\o}rner, N.: Efficient e-matching for {SMT} solvers. In:
  Pfenning, F. (ed.) Automated Deduction - CADE-21, 21st International
  Conference on Automated Deduction, Bremen, Germany, July 17-20, 2007,
  Proceedings. Lecture Notes in Computer Science, vol. 4603, pp. 183--198.
  Springer, \url{https://doi.org/10.1007/978-3-540-73595-3_13}

\bibitem{DBLP:journals/mlq/OmodeoPP96}
Omodeo, E.G., Parlamento, F., Policriti, A.: Decidability of
  {\(\exists\)}\({}^{\mbox{*}}\){\(\forall\)}-sentences in membership theories.
  Math. Log. Q.  42,  41--58 (1996),
  \url{https://doi.org/10.1002/malq.19960420105}

\bibitem{DBLP:journals/jsyml/OmodeoP12}
Omodeo, E.G., Policriti, A.: The {B}ernays - {S}ch{\"{o}}nfinkel - {R}amsey
  class for set theory: decidability. J. Symb. Log.  77(3),  896--918 (2012),
  \url{https://doi.org/10.2178/jsl/1344862166}

\bibitem{https://doi.org/10.1002/cpa.3160460104}
Parlamento, F., Policriti, A.: Undecidability results for restricted
  universally quantified formulae of set theory. Communications on Pure and
  Applied Mathematics  46(1),  57--73 (1993),
  \url{https://onlinelibrary.wiley.com/doi/abs/10.1002/cpa.3160460104}

\bibitem{parlamentoPolicriti}
Parlamento, F., Policriti, A.: The logically simplest form of the infinity
  axiom  103(1),  274--276

\bibitem{DBLP:journals/mlq/ParlamentoP92}
Parlamento, F., Policriti, A.: The decision problem for restricted universal
  quantification in set theory and the axiom of foundation. Math. Log. Q.
  38(1),  143--156 (1992), \url{https://doi.org/10.1002/malq.19920380110}

\bibitem{setlog}
Rossi, G.: $\{log\}$. \url{http://www.clpset.unipr.it/setlog.Home.html} (2008),
  last access 2022

\bibitem{schneider2001b}
Schneider, S.: The B-method: An Introduction. Cornerstones of computing,
  Palgrave (2001), \url{http://books.google.com.ar/books?id=Krs0OQAACAAJ}

\bibitem{Woodcock00}
Woodcock, J., Davies, J.: Using Z: specification, refinement, and proof.
  Prentice-Hall, Inc., Upper Saddle River, NJ, USA (1996)

\end{thebibliography}

\appendix

\section{\label{app:proofs}Proofs}

\newcommand{\Ppv}{u}
\newcommand{\Fdd}{\Fpv(d)}
\newcommand{\Pdd}{\Ppv(d)}
\newcommand{\nFdd}{\Fpv(d)}
\newcommand{\Fz}{\Fpv(x)}
\newcommand{\Pz}{\Ppv(x)}
\newcommand{\Fd}{\Fpv(d)}
\newcommand{\Pd}{\Ppv(d)}
\renewcommand{\F}{\Fpv}
\renewcommand{\P}{\Ppv}

In the following, a set of the form $\{P(x) : F(x)\}$ (where pattern and filter are
separated by a colon ($:$), instead of a bar ($|$), and the pattern is
\emph{before} the colon) is a shorthand for $\{y : \exists x (P(x) = y \land
F(x))\}$. That is, the set is written in the classic notation for intensional
sets used in mathematics.

\begin{proposition}\label{l:dD}
\begin{gather*}
\begin{split}
\forall d, & D: \\
                & \{x:\{d \plus D\} | \Fpv @ \Ppv\} = \{\Pd : \Fd\}
                \cup \{\Pz : x \in D \land \Fz\}
\end{split}
\end{gather*}
\end{proposition}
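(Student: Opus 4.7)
The plan is to prove the set equality by extensionality, unfolding the RIS semantics on the left and recognising the resulting set-builder expression as the right-hand side. The underlying fact is that set comprehension distributes over the union of its domain, so no induction on the structure of $D$ is needed; everything follows from the semantics given in Definition \ref{app:def:int_funct} and propositional calculus.

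First I would apply the RIS interpretation to the left-hand side, rewriting it as $\{y : \exists x\, (\P(x) = y \land x \in \{d \plus D\} \land \F(x))\}$. Next, using $\{d \plus D\}^\iS = \{d\} \cup D^\iS$, the membership condition $x \in \{d \plus D\}$ becomes $x = d \lor x \in D$. Substituting and distributing $\exists$ over $\lor$ splits the outer set builder into the union
\begin{equation*}
\{y : \exists x\, (\P(x) = y \land x = d \land \F(x))\} \cup \{y : \exists x\, (\P(x) = y \land x \in D \land \F(x))\}.
\end{equation*}
The one-point rule collapses the existential in the first component (the witness is forced to be $d$), yielding $\{\Pd : \Fd\}$ in the shorthand introduced at the beginning of the appendix. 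The second component is $\{\Pz : x \in D \land \Fz\}$ verbatim. Their union is exactly the right-hand side.

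The only delicate point concerns the shape of $\P$: by Definition \ref{RQ-terms} a control/pattern term may be a nested ordered pair of pairwise distinct variables rather than a single variable. Hence the one-point rule has to be applied simultaneously to every variable occurring in $\P$, and $\Pd$ must be read as the corresponding componentwise substitution. Pairwise distinctness of those variables, again guaranteed by Definition \ref{RQ-terms}, ensures that this substitution is unambiguous, so the argument goes through without modification.
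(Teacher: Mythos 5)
Your proof is correct and follows essentially the same route as the paper's: unfold the RIS semantics, rewrite $x \in \{d \plus D\}$ as $x = d \lor x \in D$, distribute conjunction over the disjunction to split the comprehension into a union, and collapse the first component via the one-point rule to obtain $\{\Pd : \Fd\}$. Your extra remark about applying the one-point rule componentwise when the control term is a tuple of distinct variables is a sound clarification the paper leaves implicit.
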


\begin{proof}
Taking any $d$ and $D$ we have:
\begin{gather*}
\{x:\{d \plus D\} | \F @ \P\} \\
= \{\Pz : x \in \{d \plus D\} \land \Fz\} \\
= \{\Pz : (x = d \lor x \in D) \land \Fz\} \\
= \{\Pz : (x = d \land \Fz) \lor (x \in D \land \Fz)\} \\
= \{\Pz : x = d \land \Fz\} \cup \{\Pz : x \in D \land \Fz\} \\
= \{\Pd : \Fd\} \cup\{\Pz : x \in D \land \Fz\}
\end{gather*}
\qed
\end{proof}

\begin{lemma}[Equivalence of rule \eqref{forall:iter}]
\begin{gather*}
\forall t, A: t \notin A \implies \\
\quad
\set{t}{A} \cup \risnocp{x:\set{t}{A}}{\Fpv} = \risnocp{x:\set{t}{A}}{\Fpv}
  \iff \Fpv(t) \land A \cup \risnocp{x:A}{\Fpv} = \risnocp{x:A}{\Fpv}
\end{gather*}
\end{lemma}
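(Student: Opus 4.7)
My plan is to reduce the biconditional to a statement about subset inclusion and then decompose the RIS on $\set{t}{A}$ into the singleton contribution of $t$ plus the RIS on $A$. First, I would apply the standard set-theoretic identity $B \cup S = S \iff B \subseteq S$ on both sides, which turns the goal into the equivalence
\begin{equation*}
\set{t}{A} \subseteq \risnocp{x:\set{t}{A}}{\Fpv} \iff \Fpv(t) \land A \subseteq \risnocp{x:A}{\Fpv}.
\end{equation*}
Next I would split the left-hand subset using $\set{t}{A} \subseteq S \iff t \in S \land A \subseteq S$, so that the first conjunct becomes a membership statement about $t$ and the second mirrors the right-hand side of the goal except that its RIS still ranges over $\set{t}{A}$.

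The heart of the argument is then Proposition \ref{l:dD}, instantiated with the identity pattern $\Ppv(x) = x$, which yields
\begin{equation*}
\risnocp{x:\set{t}{A}}{\Fpv} = \{t : \Fpv(t)\} \cup \risnocp{x:A}{\Fpv},
\end{equation*}
where the first component on the right contributes $\{t\}$ exactly when $\Fpv(t)$ holds and is empty otherwise. From this I would read off two consequences: first, $t \in \risnocp{x:\set{t}{A}}{\Fpv} \iff \Fpv(t)$, which matches the first conjunct on the right of the biconditional; and second, for every $y \in A$ one has $y \in \risnocp{x:\set{t}{A}}{\Fpv} \iff y \in \risnocp{x:A}{\Fpv}$, simply by unfolding RIS membership and observing that $y \in A$ already entails $y \in \set{t}{A}$. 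This gives $A \subseteq \risnocp{x:\set{t}{A}}{\Fpv} \iff A \subseteq \risnocp{x:A}{\Fpv}$, and combining both consequences closes the equivalence.

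I do not foresee a serious technical obstacle. The only delicate point is the status of the hypothesis $t \notin A$: for the logical equivalence stated above it is in fact not required, because the two $A$-subset statements coincide by direct unfolding of RIS membership irrespective of whether $t$ already lies in $A$. The hypothesis appears in the statement because it reflects the operational reading of rule \eqref{forall:iter}, where the peeled element is genuinely distinct from the remainder of the domain, and because it makes the decomposition provided by Proposition \ref{l:dD} a disjoint union. I would carry the hypothesis through the argument for consistency with the rule but not actually invoke it, leaving a clean set-theoretic derivation.
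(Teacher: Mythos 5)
Your proof is correct, but it follows a genuinely different route from the paper's. The paper stays in the union--equality formulation throughout: it first records the disjointness facts \eqref{eq:disjt} that follow from $t \notin A$, together with \eqref{eq:sub1} and \eqref{eq:teqphit}, then splits $\risnocp{x:\set{t}{A}}{\Fpv}$ via Proposition~\ref{l:dD}, absorbs $\risnocp{x:\{t\}}{\Fpv}$ into the $\{t\}$ already present on the left, replaces $\risnocp{x:\{t\}}{\Fpv}$ by $\{t\}$ at the cost of the conjunct $\Fpv(t)$, and finally cancels $\{t\}$ from both sides of the equality --- a step that genuinely needs $\{t\} \disj A$ and $\{t\} \disj \risnocp{x:A}{\Fpv}$, i.e.\ the hypothesis $t \notin A$. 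You instead convert both sides to subset form via $B \cup S = S \iff B \subseteq S$, split $\set{t}{A} \subseteq S$ into $t \in S \land A \subseteq S$, and argue element-wise from the unfolded RIS membership; this is more elementary and, as you correctly observe, never invokes $t \notin A$: both sides of the biconditional just say that $\Fpv$ holds on every element of $\{t\} \cup A$, so the hypothesis is superfluous for the declarative equivalence (it matters only for the operational reading of rule \eqref{forall:iter}, where the peeled element should not reappear in the residual domain). That observation is a small but genuine sharpening of the paper's statement. What each approach buys: the paper's equational chain mirrors the syntactic shape of the rewrite rule, which is convenient when checking that the argument does not depend on any restriction on RIS filters (the stated purpose of reproducing the proof in the appendix); your subset reformulation makes the link to the RUQ reading \eqref{eq:ruqid} explicit and isolates exactly where each hypothesis is (or is not) used.
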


\begin{proof}
First note that
\begin{equation}\label{eq:disjt}
t \notin A \implies \{t\} \disj A \land \{t\} \disj \risnocp{A}{\Fpv}
\end{equation}
and
\begin{equation}\label{eq:sub1}
\risnocp{x:\{t\}}{\Fpv} \subseteq \{t\}
\end{equation}
and
\begin{equation}\label{eq:teqphit}
\risnocp{x:\{t\}}{\Fpv} = \{t\} \iff \Fpv(t)
\end{equation}

\begin{gather*}
\set{t}{A} \cup \risnocp{x:\set{t}{A}}{\Fpv} = \risnocp{x:\set{t}{A}}{\Fpv} \\
\iff \{t\} \cup A \cup \risnocp{x:\{t\}}{\Fpv} \cup \risnocp{x:A}{\Fpv}
     = \risnocp{x:\{t\}}{\Fpv} \cup \risnocp{x:A}{\Fpv}
       \why{Prop. \ref{l:dD}; semantics $\plus$} \\
\iff \{t\} \cup A \cup \risnocp{x:A}{\Fpv}
     = \risnocp{x:\{t\}}{\Fpv} \cup \risnocp{x:A}{\Fpv}
       \why{\eqref{eq:sub1}; $\{t\}$ in left-hand side} \\
\iff \Fpv(t) \land \{t\} \cup A \cup \risnocp{x:A}{\Fpv}
     = \{t\} \cup \risnocp{x:A}{\Fpv}
       \why{\eqref{eq:teqphit}} \\
\iff \Fpv(t) \land A \cup \risnocp{x:A}{\Fpv}
     = \risnocp{x:A}{\Fpv}
       \why{\eqref{eq:disjt}; basic property of $\disj$ and $\cup$}
\end{gather*}
\end{proof}

\end{document}